\newtheorem{thm}{Theorem}[section]
\newtheorem{lem}[thm]{Lemma}
\newtheorem{defn}[thm]{Definition}
\newtheorem{claim}{Claim}[thm]
\newcommand{\first}{\operatorname{first}}
\newcommand{\last}{\operatorname{last}}
\newcommand{\previous}{\operatorname{previous}}
\newcommand{\next}{\operatorname{next}}
\newcommand{\Lp}{\operatorname{LeftPoint}}
\newcommand{\Rp}{\operatorname{RightPoint}}
\title{Drawing the Almost Convex Set in an Integer Grid of Minimum Size}
\author{ Frank Duque \thanks{Departamento de Matem\'aticas, CINVESTAV. 
 } \and
 Ruy Fabila-Monroy \footnotemark[1] \and
 Carlos Hidalgo-Toscano \footnotemark[1] \and
 Pablo Pérez-Lantero \thanks{Departamento de Matemática y Ciencia de la Computación, Universidad de Santiago, Chile.}}
\begin{document}

\maketitle


\begin{abstract} 
In 2001, Károlyi, Pach and Tóth  introduced 
a family of point sets to solve an Erd\H{o}s-Szekeres type problem; 
which have been used to solve several other Ed\H{o}s-Szekeres type problems.
In this paper we refer to these sets as nested almost convex sets. 
A nested almost convex set  $\mathcal{X}$ has the property that the interior of every triangle 
determined by three points in the same convex layer of $\mathcal{X}$,
 contains exactly one point of $\mathcal{X}$. In this paper, we introduce a characterization of nested
 almost convex sets. Our characterization implies that there exists at most one (up to order type)
 nested almost convex set of $n$ points.
 We use our characterization to obtain a linear time algorithm to construct nested almost convex sets of $n$ points, 
 with integer coordinates of absolute values at most $O(n^{\log_2 5})$. 
 Finally,  we use our characterization to obtain an
 $O(n\log n)$-time algorithm to determine whether a set of points is a nested almost convex set.
\end{abstract}

\section{Introduction.} 
We say that a set of points in the plane is in \emph{general position} 
if no three of them are collinear.
Throughout this paper all points sets are in general position.
In \cite{Erdos_EmptyConvex_1978}, Erd\H{o}s asked for the minimum integer $E(s,l)$ that satisfies the following. 
Every set of at least $E(s,l)$ points,
contains $s$ points in convex position and at most $l$ points in its interior.
A \emph{$k$-hole} of $\mathcal{X}$ is a polygon with $k$ vertices, all of which belong to $\mathcal{X}$
and has no points of $\mathcal{X}$ in its interior; the polygon
may be convex or non-convex.
In 1983, Horton surprised the community
with a simple proof that $E(s,l)$ does not exist for $l=0$ and $s\geq 7$ \cite{Horton_set_1983};
Horton constructed arbitrarily large point set with no convex $7$-holes.
Note that for $l=0$, $E(s,l)$ is the minimum integer such that 
every set of at least $E(s,0)$ points contains at least one $s$-hole.

In 2001 \cite{Pach_almostConvex1} Károlyi, Pach and Tóth  introduce 
a family of sets that, 
although was not given a name, it was used in other works related to the original question
of Erd\H{o}s.
In this paper we refer the elements of this family as \emph{nested almost convex sets}.
They have been used in the following problems.

\begin{description} [leftmargin=*]
 \item[A modular version of the Erd\H{o}s problem.] 
 In 2001 \cite{Pach_almostConvex1} Károlyi, Pach and Tóth use the nested almost convex sets to prove that,
 for any $s\geq5l/6+O(1)$, there is an integer $B(s,l)$ with the following property.
 Every set of at least $B(s,l)$ points in general position contains $s$ points in convex position 
 such that the number of points in the interior of their convex hull is $0$, modulo $(l)$.
 This "modular" version of the Erd\H{o}s problem was proposed by
 Bialostocki, Dierker, and Voxman \cite{Bialostocki1991}.
 This was proved for $s\geq l+2$ by Bialostocki et al.
 The original upper bound on $B(s,l)$ was later improved by Caro in \cite{Caro1996}.

 \item[A version of the Erd\H{o}s problem in almost convex sets.]
 We say that $\mathcal{X}$ is an \emph{almost convex} set if every triangle
 with vertices in $\mathcal{X}$ contains at most one point of $\mathcal{X}$ in its
 interior.
 Let $N(s)$ be the smallest integer such that every almost convex set of at least $N(s)$ points 
 contains an $s$-hole.
 In 2007 \cite{Valtr2007AlmostConvex2} Valtr Lippner and Károlyi 
 use the nested almost convex sets to prove that:
 \begin{equation} \label{eq:EqualityHolesAlmostConvex}
  N(s)=  \begin{cases}
    2^{(s+1)/2}-1       & \quad \text{if } s\geq 3 \text{ is odd}\\
    \frac{3}{2}2^{s/2}-1  & \quad \text{if } s\geq 4 \text{ is even.}\\
  \end{cases} 
 \end{equation}
 The authors use the nested almost convex sets to attain the equality in (\ref{eq:EqualityHolesAlmostConvex}).
 The existence of $N(s)$ was first proved by Károlyi, Pach and Tóth in \cite{Pach_almostConvex1}.
 The upper bound for $N(s)$ was improved by  Kun and Lippner in \cite{Kun2003}, 
 and it was improved again by Valtr in \cite{Pavel_CupsCaps}.
 
 \item[Maximizing the number of non-convex $4$-holes.] 
 In 2014 \cite{FourHoles} Aichholzer, Fabila-Monroy, Gonz{\'a}lez-Aguilar, 
 Hackl, Heredia, Huemer, Urrutia and Vogtenhuber prove that 
 the maximum number of non-convex $4$-holes in a set of $n$ points is at most $n^3/2-\Theta(n^2)$.
 The authors use the nested almost convex sets to prove that some sets have 
 $n^3/2-\Theta(n^2\log(n) )$ non-convex $4$-holes.
 
 \item[Blocking $5$-holes.]
 A set $B$ blocks the convex $k$-holes in $\mathcal{X}$, 
 if any $k$-hole of $\mathcal{X}$ contains at least one element of $B$ 
 in the interior of its  convex hull.
 In 2015 \cite{Bloking} Cano, Garcia, Hurtado, Sakai, Tejel and Urritia  
 use the nested almost convex sets to prove that:
 $n/2-2$ points are always necessary and sometimes sufficient
 to block the $5$-holes of a point set with $n$ elements in convex position and $n=4k$.
 The authors use the nested almost convex sets as an example of
 a set for which $n/2-2$ points are sufficient to block its $5$-holes.
 
\end{description}
We now define formally the nested almost convex sets.
\begin{defn}\label{def:nested} Let $\mathcal{X}$ be a point set; let 
$k$ be the number of convex layers of $\mathcal{X}$; and for $1\leq j \leq k$, let
$R_j$ be the set of points in the $j$-th convex layer of $\mathcal{X}$.
We say that $\mathcal{X}$ is a nested almost convex set if:
\begin{enumerate}
 \item $\mathcal{X}_j:=R_1\cup R_2 \cup \dots \cup R_j$ is in general position,
 \item the vertices in the convex hull of $\mathcal{X}_j$ are the elements of $R_j$, and
 \item any triangle determined by three points of $R_j$ contains precisely one point of 
 $\mathcal{X}_{j-1}$ in its interior.
\end{enumerate}
\end{defn}

  In this paper, we give a characterization of when a set of points is a nested almost convex set. This is done by first
  defining a family of trees. If there exists a map, that satisfies certain properties, from the point set to the nodes of a tree in the family, then
  the point set is a nested almost convex set. This map encodes a lot of information about the point set. 
  For example, it determines the location of any given  point with respect to the convex hull; we use this information 
  to obtain an  $O(n\log n)$-time algorithm to decide  whether a set of points is a nested almost convex set.
  This map also determines the orientation of any given triplet of points. This implies that for every $n$ there exists
  essentially at most one nested almost convex set. We further apply this information
  to obtain a linear-time algorithm that produces a representation of a nested almost convex set of $n$ points on a small integer grid of size $O(n^{\log_2 5})$.

The \emph{order type} of a point set $\mathcal{X}=\lbrace x_1, x_2, \dots x_n\rbrace$
is a mapping that assigns to each ordered triplet $(x_i,x_j,x_k)$ an orientation.
If $x_k$ is to the left of the directed line from $x_i$ to $x_j$,
the orientation of $(x_i,x_j,x_k)$ is counterclockwise.
If $x_k$ is to the right of the directed line from $x_i$ to $x_j$,
the orientation of $(x_i,x_j,x_k)$ is clockwise.
We say that two set of points have the  same order type, 
if there exist a bijection between these sets that preserves the orientation of all triplets. 

The order type was introduced by Goodman and Pollack in \cite{m_sorting},
and it has been widely used in Combinatorial Geometry to classify point sets; 
two sets of points are essentially the same if they have the same order type.
As a consequence of the characterization of nested almost convex sets presented in Section \ref{Section:Characterization},
we have the following.

 \begin{thm} \label{thm:PossibleAlmosConvex}
  If $n=2^{k-1}-2$ or $n=3\cdot2^{k-1}-2$ there is exactly one order type 
  that correspond to a nested almost convex set with $n$ points;
  for other values of $n$, nested almost convex sets with $n$ points
  do not exist.
 \end{thm}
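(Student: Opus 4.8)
The plan is to read off both assertions from the characterization proved in Section~\ref{Section:Characterization}, which attaches to every nested almost convex set a tree $T$ belonging to a fixed family, a map $\phi$ from the point set to the nodes of $T$ satisfying prescribed properties, and which in addition recovers the orientation of any ordered triple of points from the images of those points under $\phi$. Granting this, the theorem breaks into two essentially independent parts: determining for which $n$ a nested almost convex set can exist, and showing that for each such $n$ the order type is forced.

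For the first part I would argue directly on the convex layers; this also explains why exactly two arithmetic forms appear. Let $R_1,\dots,R_k$ be the convex layers of a nested almost convex set $\mathcal X$, and put $r_j=|R_j|$ and $N_j=|\mathcal X_j|=r_1+\cdots+r_j$. If $r_1\ge 3$, some triangle spanned by three points of $R_1$ would have to contain a point of $\mathcal X_0=\emptyset$, which is impossible; hence $r_1\in\{1,2\}$. For $j\ge 2$, part~(2) of Definition~\ref{def:nested} forces $r_j\ge 3$, so the convex polygon $R_j$ can be triangulated using only its $r_j$ vertices into $r_j-2$ triangles whose interiors partition the interior of the hull of $\mathcal X_j$; each point of $\mathcal X_{j-1}$ lies in exactly one of these triangles, and by part~(3) of Definition~\ref{def:nested} each triangle contains exactly one such point, so $N_{j-1}=r_j-2$. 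Therefore $N_j=2N_{j-1}+2$ for $j\ge 2$, equivalently $N_j+2=(N_1+2)\,2^{j-1}$, so $N_k+2$ equals $3\cdot 2^{k-1}$ when $r_1=1$ and $2^{k+1}=2^{(k+2)-1}$ when $r_1=2$. After re-indexing the depth parameter to agree with Section~\ref{Section:Characterization}, this is exactly the dichotomy $n=3\cdot 2^{k-1}-2$ or $n=2^{k-1}-2$; for every other value of $n$ the recursion admits no valid $k$ (for example $n=3,5,7,8,9$ are ruled out), so no nested almost convex set with that many points exists. Conversely, the explicit construction given in the paper realizes a nested almost convex set for each admissible $n$, so at least one order type occurs.

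For the second part, fix an admissible $n$ and let $\mathcal X^{(1)},\mathcal X^{(2)}$ be nested almost convex sets with $n$ points. By the characterization each $\mathcal X^{(i)}$ comes with a tree $T$ — the same tree for both, since it is the unique member of the family with $n$ nodes — and a map $\phi_i\colon\mathcal X^{(i)}\to T$; by the properties these maps are injective, and since $|T|=n=|\mathcal X^{(i)}|$ they are bijections. Then $\psi:=\phi_2^{-1}\circ\phi_1$ is a bijection from $\mathcal X^{(1)}$ to $\mathcal X^{(2)}$, and because the orientation of a triple depends only on the images of its points in $T$, for all $x,y,z\in\mathcal X^{(1)}$ the triples $(x,y,z)$ and $(\psi(x),\psi(y),\psi(z))$ have the same orientation. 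Hence $\psi$ is an order-type isomorphism, so all nested almost convex sets with $n$ points share one order type, which completes the proof modulo Section~\ref{Section:Characterization}.

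Essentially all of the difficulty lies in establishing that characterization; within the argument above the only points requiring care are the base of the layer recursion — that $r_1\in\{1,2\}$ and $r_j\ge 3$ for $j\ge 2$, which is what pins down precisely the two families and rules out all other $n$ — and the observation that the rule expressing orientations in terms of node images is a genuine function of the labelled tree, so that $\psi$ is well defined as an order-type isomorphism irrespective of any automorphisms $T$ might have.
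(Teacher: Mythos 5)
Your proposal is correct and follows essentially the same route as the paper: the admissible values of $n$ come from the convex-layer count (your recursion $N_j=2N_{j-1}+2$ is just a cumulative rephrasing of the paper's Lemma~\ref{lem:ConvexLayers}, which shows $|R_{j+1}|=2|R_j|$ via the same triangulation argument), and uniqueness of the order type comes from composing the two labelings into the tree and noting that orientations are a function of node positions, which is exactly the paper's Lemma~\ref{lem:DeterminesOrderTipe}. The only caveat is that, as you acknowledge, all the real work is deferred to the characterization theorem, and your two points ``requiring care'' (the base case $r_1\in\{1,2\}$ and the well-definedness of the orientation rule on the labelled tree) are indeed the ones the paper also handles explicitly.
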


In previous papers, two constructions of nested almost convex sets have been presented.
The first construction was introduced by Károlyi, Pach and Tóth in \cite{Pach_almostConvex1}.
The second construction was introduced by Valtr, Lippner and Károlyi in \cite{Valtr2007AlmostConvex2}
six years later. 

\begin{description} [leftmargin=*]
 \item[Construction 1:] Let $X_1$ be a set of two points.
  Assume that $j>0$ and that $X_j$ has been constructed.  
  Let $z_1,\dots z_r$ denote the vertices of $R_j$ in clockwise order. 
  Let $P_j$ be the polygon with vertices in $R_j$. Let $\varepsilon_j , \delta_j >0$.
  For any $1\leq i\leq r$, let $\ell_i$ denote the line through $z_i$ orthogonal to the bisector of the angle of $P_j$ at $z_i$. 
  Let $z_i'$ and $z_i''$ be two points in $\ell_i$ at distance $\varepsilon_j$ of $z_i$. 
  Finally, move $z_i'$ and $z_i''$ away from $P_j$ at distance $\delta_j$,
  in the direction orthogonal to $\ell_i$, and denote the resulting points by $u_i'$ and $u_i''$, respectively.
  Let $R_{j+1}= \lbrace u_i',u_i'':i=1\dots r \rbrace $ and  $X_{j+1}=X_j \cup R_{j+1}$.
  It is easy to see that if $\varepsilon_j$ and $\frac{\varepsilon_j}{\delta_j}$ are sufficiently small, 
  then $X_{j+1}$ is an almost convex set. See Figure \ref{fig:type1}.
 
 \item[Construction 2:] Let $X_1$ be a set of one point. 
  Let $R_2$ be a set of three points such that, 
  the point in $X_1$ is in the interior of the triangle determined by $R_2$.
  Let $X_2=X_1\cup R_2$.
  Now recursively, suppose that $X_j$ and $R_j$ have been constructed and
  construct the next convex layer $R_{j+1}$ as in Construction 1. See Figure \ref{fig:type2}.
\end{description}

\begin{figure}
    \centering
    \begin{subfigure}[b]{0.3\textwidth}
        \includegraphics[scale=.48]{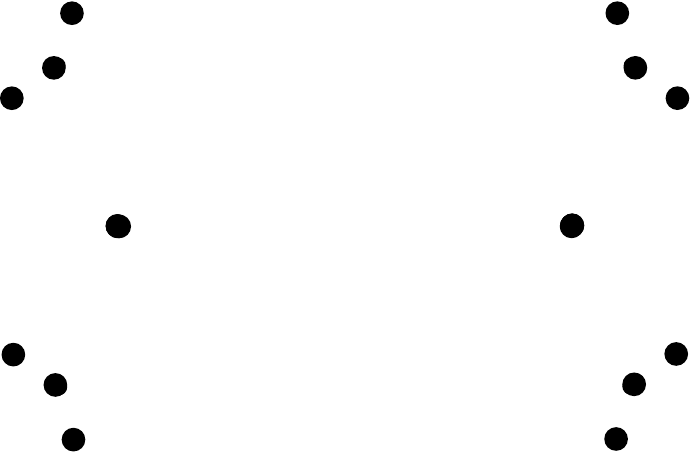}
        \caption{Construction 1}
        \label{fig:type1}
    \end{subfigure}
    \hspace{2 cm } 
    \begin{subfigure}[b]{0.3\textwidth}
        \hspace{5 mm}\includegraphics[scale=.5]{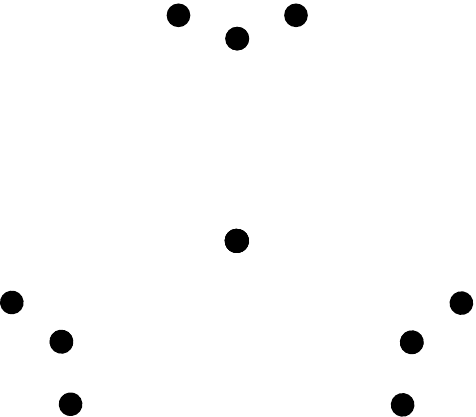}
        \caption{Construction 2}
        \label{fig:type2}
    \end{subfigure}

    \caption{Examples of Almost Convex Sets}\label{fig:types}
\end{figure}

Computers are frequently used to decide whether particular sets satisfy some properties.
Thus, a representation of large nested almost convex sets could be necessary. 
Construction 1 and Construction 2 provide such representations; however,  the coordinates of the points in those constructions
are not integers or are too large with respect to the value of $n$. This is prone to rounding errors or incrases the cost of computation. 
Thus, it is better if the coordinates of the points in the representations
are small integers. 

A \emph{drawing} of $\mathcal{X}$ is a set of points with integer coordinates 
and with the same order type than $\mathcal{X}$.
The \emph{size} of a drawing is the maximum of the absolute values of its coordinates.
Other works on point sets drawings are \cite{DrawingHortonSet, double_circle, goodman,DrawingConvex}.

In Section \ref{section:drawing}, 
we prove that a nested almost convex set  of $n$ points (if it exists), can be drawn in
an integer grid of size $O(n^{\log_2 5}) \simeq O(n^{2.322})$.
Furthermore, we provide a linear time algorithm to find this drawing.
A lower bound of $\Omega(n^{1.5})$ on the size of any drawing
of a nested almost convex set of $n$ points can be derived from the following observations.
Any drawing of an $n$-point set in convex position has size $\Omega(n^{1.5})$ \cite{DrawingConvex}; and every
nested almost convex set of $n$ points has a $\Theta(n)$ points in convex position. This is presented
in detail in Section~\ref{Section:Characterization}.

In Section \ref{section:Certificate}, we are interested in finding an algorithm 
to decide whether a given point set is a nested almost convex set. 
A straightforward $O(n^4)$-time algorithm for this problem can be given using Definition~\ref{def:nested}.
This can be improved to $O(n^2)$ as follows.
Using the algorithm presented in  Section~\ref{section:drawing} an instance
of nested almost convex set can be constructed. Recently in \cite{DecisionOrderType},
Aloupis, Iacono, Langerman, \"Oskan and Wuhrer gave an $O(n^2)$-time algorithm to decide whether two
given sets of $n$ points have the same order type. Thus, using their algorithm and our instance
solves the decision problem in $O(n^2)$ time.
We further improve on this by presenting $O(n \log n)$ time algorithm.

\section{Characterization of Nested Almost Convex \\ Sets.} \label{Section:Characterization}

In this section we prove Theorem \ref{thm:characterization}, 
in which the nested almost convex sets are characterized.
First we introduce some definitions.


Throughout this section: 
$\mathcal{X}$ will denote a set of $n$ points in general position;
$k$ will denote the number of convex layers of $\mathcal{X}$;
$R_j$ will denote the set of points in the $j$-th convex layer of $\mathcal{X}$, 
$R_1$ being the most internal;
and $\mathcal{X}_j$ will denote the set of points in $\mathcal{X}$,
that are in $R_j$ or in the interior of its convex hull.
\begin{description}
 \item [$\mathbf{T_1(k)}$:] We define $T_1(k)$ as the complete binary tree with $2^{k+1}-1$ nodes.
 The $j$\emph{-level} of $T_1(k)$ is defined as the set of the nodes at distance $j$ from the root.

 \item [Type 1:] We say that $\mathcal{X}$ is of \emph{type 1} if 
 $|R_j|=2^{j}$ for $1\leq j \leq k-1$.
 Note that if $\mathcal{X}$ is of Type 1, then for every $1\leq j \leq k$, 
 the number of points in $R_j$ is equal to the number of nodes in the $j$-level of $T_1(k)$.

 \item [Type 1 labeling:]  An injective function $\psi:\mathcal{X}\rightarrow T_1(k)$ is a \emph{type 1 labeling}, 
 if $\mathcal{X}$ is Type 1 and $\psi$ labels the nodes (different to the root) of $T_1(k)$
 with different points of $\mathcal{X}$.

 \item [$\mathbf{T_2(k)}$:] We define $T_2(k)$ as the tree that, its root has three children,
 and each child is the root of a complete binary tree with $2^{k-1}-1$ nodes.
 The $j$\emph{-level} of $T_2(k)$ is defined as the set of the nodes at distance $j-1$ from the root.
 
 \item[Type 2:] We say that $\mathcal{X}$ is of \emph{type 2} if 
 $|R_1|=1$ and $|R_j|=3\cdot 2^{j-2}$ for $2\leq j \leq k$.
 Note that if $\mathcal{X}$ is of Type 2, the for every $1\leq j \leq k$, 
 the number of points in $R_j$ is equal to the number of nodes in the $j$-level of $T_2(k)$.
 
 \item[Type 2 labeling:] An injective function $\psi:\mathcal{X}\rightarrow T_2(k)$ is a \emph{Type 2 labeling}, 
 if $\mathcal{X}$ is Type 2 and $\psi$ labels the nodes (also the root) of $T_2(k)$ 
 with different points of $\mathcal{X}$.
 
 \item[Labeling:] Let $T$ be equal to $T_1(k)$ or $T_2(k)$.
 We say that a map $\psi:\mathcal{X}\rightarrow T$ is a \emph{labeling},
 if $\psi$ is a Type 1 labeling or a Type 2 labeling. 
 Note that, if $\mathcal{X}$ admits a labeling then $n=2^{k-1}-2$ or $n=3\cdot2^{k-1}-2$.
\end{description}
In the following, when the map $\psi:\mathcal{X}\rightarrow T$ is clear from the context, 
we say that a point is the \emph{label} of a node of $T$
if the point is mapped to the node by $\psi$.  
This way, given a node $u$ of $T$, we denote by $x_u$ its label.
We denote by $u(l)$ and $u(r)$ the left and right children of $u$ in $T$, respectively.
 
\begin{description}
 \item[Nested:] We say that a labeling is nested if, for $1\leq j \leq k$, 
 the left to right order of labels of the nodes in the $j$-level of $T$,
 corresponds to the counterclockwise order of the points in $R_j$.
 
 \item[Adoptable:]  Given a point $p$ in $R_j$ and two points $q_1,q_2$ in $R_{j+1}$,
 we say that $q_1,q_2$ are \emph{adoptable from} $p$ if, for every other point $q_3$ in $R_{j+1}$,
 $p$ is in the interior of the triangle determined by $q_1$, $q_2$, $q_3$.
 We say that a nested labeling is \emph{adoptable} if, for every node $u$ in $T$, 
 $x_{u(l)}$ and $x_{u(r)}$ are adoptable from $x_u$.
\end{description}
 We denote by $R_j(u)$ the set of points in $R_j$ that label a descendant of $u$.
 With respect to the counterclockwise order, we denote by:
 $\first[R_j(u)]$, the first point in $R_j(u)$; 
 $\last[R_j(u)]$, the last point in $R_j(u)$;
 $\previous[R_j(u)]$, the point in $R_j$ previous to $\first[R_j(u)]$; and
 $\next[R_j(u)]$, the point in $R_j$ next to $\last[R_j(u)]$. See figure \ref{fig:Points-R_j-u}.
\begin{figure}[ht]
	\centering
	\includegraphics[scale=.5]{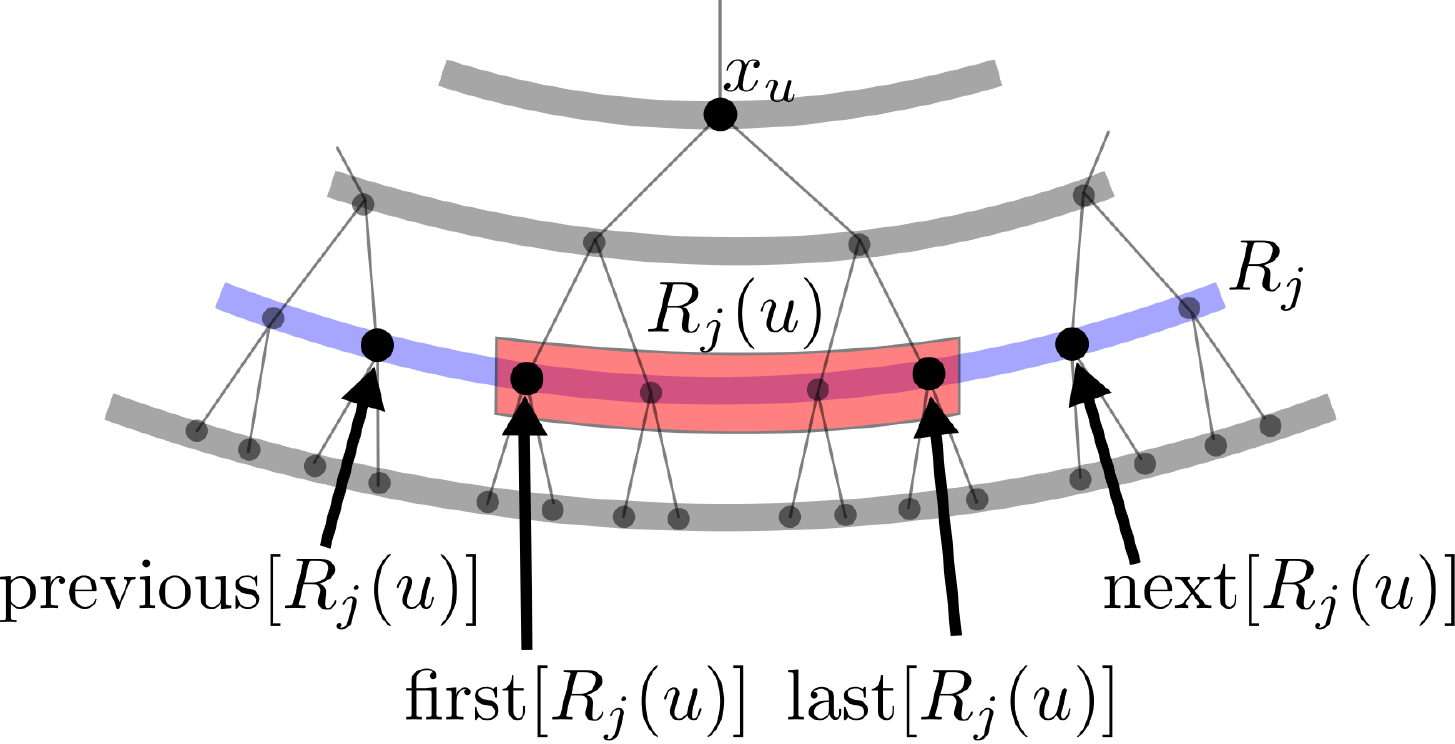}
	\caption{Illustration of $R_j(u)$, $\first[R_j(u)]$, $\last[R_j(u)]$, $\previous[R_j(u)]$, and $\next[R_j(u)]$.}
	\label{fig:Points-R_j-u}
\end{figure}
\begin{description}
 \item[Well laid:] We say that a nested labeling is \emph{well laid} if, for every $u$ in $T$,
  $x_u$ is  in the intersection of the triangle determined  
  by $\previous[R_k(u)]$, $\first[R_k(u)]$, $\last[R_k(u)]$ 
  and the triangle determined by $\first[R_k(u)]$, $\last[R_k(u)]$, \\ $\next[R_k(u)]$.
\end{description}
Let $u$ be a node of $T$. We denote by $\mathcal{X}_u$ the set of points $x_v$ such that
$v$ is descendant of $u$ in $T$.
We denote by $\overline{\mathcal{X}_{u}}$ the set $\mathcal{X}_{u}\cup \{x_u\}$.
Given two sets of points $A$ and $B$, 
we call any directed line from a point in $A$ to a point in $B$, an $(A,B)$-line. 
\begin{description}
 \item[Internal separation:]
 We say that a nested labeling is an internal separation if for every node $u$ of $T$,
 every point in $\mathcal{X}/\mathcal{X}_u$ is to the left of every
 $(\overline{\mathcal{X}_{u(l)}},\overline{\mathcal{X}_{u(r)}})$-line $\ell$.
 
 \item[External separation:]
 We say that a nested labeling is an external separation if for every node $u$ of $T$,
 every point in $\mathcal{X}/\overline{\mathcal{X}_u}$  is to the left of every 
 $(\overline{\mathcal{X}_{u(l)}},\lbrace x_u\rbrace)$-line and to the left of every
 $(\lbrace x_u\rbrace,\overline{\mathcal{X}_{u(r)}})$-line. 
\end{description}

\begin{thm} \label{thm:characterization}
 Let $\mathcal{X}$ be a point set in general position. 
 Then the following statements are equivalent:
 \begin{enumerate}
  \item \label{thm:characterization-def} 
  $\mathcal{X}$ is a nested almost convex set. 
  
  \item \label{thm:characterization-region} 
  $\mathcal{X}$ admits a labeling that is nested, adoptable and well laid. 
  
  \item \label{thm:characterization-order}
  $\mathcal{X}$ admits a labeling that is an internal separation
  and an external separation.
 \end{enumerate}
\end{thm}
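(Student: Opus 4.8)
The plan is to prove the chain of equivalences as a cycle, say \ref{thm:characterization-def} $\Rightarrow$ \ref{thm:characterization-region} $\Rightarrow$ \ref{thm:characterization-order} $\Rightarrow$ \ref{thm:characterization-def}, or to prove \ref{thm:characterization-region} $\Leftrightarrow$ \ref{thm:characterization-order} separately (since these two are ``structural'' conditions on an already-chosen labeling) and then sandwich them with \ref{thm:characterization-def}. I lean toward the cycle. Throughout, I would carry out induction on the number of convex layers $k$, since the whole setup — $T_1(k)$, $T_2(k)$, the levels $R_j$, the sets $\mathcal{X}_j$ — is recursive, with the inner layers $\mathcal{X}_{k-1}$ forming a smaller nested almost convex set.

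For \textbf{\ref{thm:characterization-def} $\Rightarrow$ \ref{thm:characterization-region}}: first I would establish the numerology — from Definition~\ref{def:nested}(3), counting how a triangle on $R_k$ must capture exactly one point of $\mathcal{X}_{k-1}$, one deduces $|R_k| = 2|R_{k-1}|$ when $|R_{k-1}| \geq 3$ (each inner point is ``adopted'' by a unique adjacent pair of outer points), and $|R_2|=3$, $|R_1|=1$ at the base; this forces $\mathcal{X}$ to be Type 1 or Type 2 and pins down $n$. Then I would define $\psi$ recursively: by induction $\mathcal{X}_{k-1}$ admits a nested/adoptable/well-laid labeling into the appropriate subtree, and I extend it by attaching the two outer points adoptable from each inner point $x_u$ as the children $u(l), u(r)$. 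The adoptability condition is essentially a restatement of Definition~\ref{def:nested}(3) phrased locally; ``nested'' requires checking that the counterclockwise cyclic order on $R_k$ is consistent with the left-to-right order induced from the parent level — this uses that $\mathcal{X}_{k-1}$'s hull vertices are exactly $R_{k-1}$ and that the two children of $x_u$ sit ``just outside'' the edge of the hull of $\mathcal{X}_{k-1}$ near $x_u$. ``Well laid'' is the assertion that $x_u$ lies in a specific quadrilateral region determined by the extreme descendants of $u$ in $R_k$ together with their neighbors; I would prove this by induction, showing the region shrinks appropriately as we pass from $u$ to its children.

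For \textbf{\ref{thm:characterization-region} $\Rightarrow$ \ref{thm:characterization-order}} (or the reverse): this is the ``local-to-global'' step. Given a nested, adoptable, well-laid labeling, I want to upgrade the local triangle-containment facts to the global separation statements about whole subtrees $\overline{\mathcal{X}_u}$. The idea: $\overline{\mathcal{X}_u}$ is itself (a translate/scaled copy of) a smaller nested almost convex set with $x_u$ as its unique innermost point, sitting in a small ``pocket'' near one edge of each enclosing layer. The well-laid condition controls exactly where that pocket is. So every point of $\mathcal{X}_u$ is confined to a thin lens-shaped region, and the internal/external separation assertions say that everything \emph{not} in this subtree lies on one consistent side of any line crossing between the left-pocket and the right-pocket (resp.\ between a pocket and $x_u$). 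I would prove this by induction on the depth of $u$, peeling off one layer at a time and invoking the well-laid containment to locate the two child-pockets relative to $x_u$ and relative to the rest of the point set. Conversely, internal plus external separation immediately give, for a leaf's parent or any $u$, that the three points $x_u, x_{u(l)}, x_{u(r)}$ form a triangle whose interior meets $\mathcal{X}$ exactly in the subtree structure forced below — recovering adoptable and well-laid.

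For \textbf{\ref{thm:characterization-order} $\Rightarrow$ \ref{thm:characterization-def}}: here the separation conditions are strong enough to reconstruct the convex-layer structure directly. Internal separation applied at the root says all of $R_1 \cup \dots \cup R_{k-1}$ lies strictly inside every line spanned by two ``halves'' of the top layer, which forces $R_k$ to be exactly the hull of $\mathcal{X}$ and $\mathcal{X}_{k-1}$ to be nested inside; induction then peels layers. To verify Definition~\ref{def:nested}(3), take any triangle on three points $a,b,c \in R_k$; the nested condition orders them cyclically, and repeated application of internal/external separation pins down which descendants lie inside versus outside the three lines $ab$, $bc$, $ca$ — one checks that exactly one inner point (the label of the deepest common structure) survives inside all three. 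The main obstacle, I expect, is precisely this last verification: turning the somewhat abstract internal/external separation axioms — which are statements about lines between \emph{sets} — into the concrete count ``exactly one point in each triangle.'' It requires a careful case analysis of where three chosen outer points sit relative to the tree (are they in a common subtree? split across $u(l)$ and $u(r)$? etc.), and showing the separations conspire so that all but one candidate interior point is pushed outside one of the three edges. Getting the bookkeeping of ``which side is left'' consistent across all these cases, using general position to rule out degeneracies, is where the real work lies; the other implications are comparatively mechanical inductions once the right invariant (each $\overline{\mathcal{X}_u}$ is a scaled nested almost convex set living in a thin pocket) is isolated.
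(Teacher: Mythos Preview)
Your cycle $1\Rightarrow 2\Rightarrow 3\Rightarrow 1$ and your sketches of the first two implications match the paper's argument closely: the paper proves the numerology as Lemma~\ref{lem:ConvexLayers}, builds $\psi'$ from adoptable pairs (Lemma~\ref{lem:Adoptables}), verifies well-laidness (Lemma~\ref{lem:LabelInTriangle}), and then derives the two separations geometrically (Lemmas~\ref{lem:internal-separation} and~\ref{lem:external-separation}).

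The genuine divergence is at $3\Rightarrow 1$. You propose to verify Definition~\ref{def:nested}(3) head-on: given a triangle on $R_j$, case-split on how its three vertices distribute across the subtrees of $T$ and use the separation axioms to show exactly one inner point survives inside all three edges. You correctly identify this as the main obstacle, and the bookkeeping is substantial. The paper avoids it entirely with an order-type rigidity argument (Lemma~\ref{lem:DeterminesOrderTipe}): for any three nodes $u_1,u_2,u_3$ of $T$, one locates the deepest $w$ whose subtree contains at least two of them, and then either an external or an internal separation at $w$ fixes the orientation of $(x_{u_1},x_{u_2},x_{u_3})$ purely from tree structure. Hence any $\mathcal{X}$ satisfying condition~3 has the same order type as the output of Construction~1 or~2, which is a nested almost convex set by inspection; since the property is order-type invariant, so is $\mathcal{X}$. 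This buys the paper a one-lemma $3\Rightarrow 1$ in place of your case analysis, and it yields Theorem~\ref{thm:PossibleAlmosConvex} (uniqueness of the order type) as an immediate corollary --- something your direct route would still need to argue separately. Your approach is more self-contained (it does not lean on the prior existence of Constructions~1 and~2), but longer.
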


\subsection*{Proof of Theorem \ref{thm:characterization}}

The proof of Theorem \ref{thm:characterization} is divided into three parts:
first we prove that $1\implies2$; afterwards we prove that $2\implies3$;
and finally we prove that $3\implies1$.

\subsubsection*{\ref{thm:characterization-def} $\implies$ \ref{thm:characterization-region}} \label{SubSection:AlmostConvexHasLabeling}

In this part we assume that $\mathcal{X}$ is a nested almost convex set, 
and we introduce a labeling $\psi'$  that is nested, adoptable and well laid.

It is clear from the definition of labeling that
a necessary condition for $\mathcal{X}$ to admit a labeling
is that $\mathcal{X}$ must be type 1 or type 2. 
In the following lemma we prove that,
if $\mathcal{X}$ is a nested almost convex, then $\mathcal{X}$ is type 1 or type 2.

 \begin{lem} \label{lem:ConvexLayers}
 If $\mathcal{X}$ is a nested almost convex set then we have one of the following cases:
 \begin{enumerate}
  \item \label{lem:ConvexLayersCase1} $|R_j|=2^{j}$ for $1\leq j \leq k-1$.
  \item \label{lem:ConvexLayersCase2} $|R_1|=1$ and $|R_j|=3\cdot 2^{j-2}$ for $2\leq j \leq k$.
 \end{enumerate}
\end{lem}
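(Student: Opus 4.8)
The plan is to read the two layer-size patterns straight out of condition 3 of Definition~\ref{def:nested} by a triangulation-counting argument, and then solve the resulting recurrence. First I would pin down the innermost layer: since $\mathcal{X}_1=R_1$ is in general position and $\mathcal{X}_0=\emptyset$, if $|R_1|\ge 3$ then three points of $R_1$ would span a triangle that, by condition 3, must contain exactly one point of $\mathcal{X}_0=\emptyset$ — impossible. Hence $|R_1|\in\{1,2\}$. (If $k=1$ the lemma already holds, since case~\ref{lem:ConvexLayersCase1} is then a vacuous condition when $|R_1|=2$, and case~\ref{lem:ConvexLayersCase2} holds when $|R_1|=1$; so I may assume $k\ge 2$.)

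The heart of the argument is the identity $|R_j|=|\mathcal{X}_{j-1}|+2$ for every $2\le j\le k$. For $j\ge 2$ the layer $R_j$ has at least three points: otherwise $\mathrm{conv}(\mathcal{X}_j)$ would be a point or a segment, forcing $|\mathcal{X}_j|\le 2$ by general position, hence $\mathcal{X}_j=R_j$ and $j=1$. By condition 2 the points of $R_j$ are exactly the vertices of the convex polygon $Q_j:=\mathrm{conv}(\mathcal{X}_j)$, and $\mathcal{X}_{j-1}$, being disjoint from $R_j$, lies in the interior of $Q_j$. Now fix any triangulation of $Q_j$ using only the vertices in $R_j$; it has exactly $|R_j|-2$ triangles, each of them a triangle spanned by three points of $R_j$. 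Because $\mathcal{X}_j$ is in general position, no point of $\mathcal{X}_{j-1}$ lies on a side or diagonal of $Q_j$, so the triangulation partitions $\mathcal{X}_{j-1}$ among its triangles; and by condition 3 each triangle contains exactly one point of $\mathcal{X}_{j-1}$. Summing over the $|R_j|-2$ triangles yields $|\mathcal{X}_{j-1}|=|R_j|-2$.

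To finish, write $|\mathcal{X}_{j-1}|=\sum_{i=1}^{j-1}|R_i|$, so the identity reads $|R_j|=2+\sum_{i=1}^{j-1}|R_i|$ for $2\le j\le k$. Subtracting consecutive instances gives $|R_{j+1}|=2|R_j|$ for $2\le j\le k-1$, while $|R_2|=2+|R_1|$. If $|R_1|=1$ then $|R_2|=3$ and by induction $|R_j|=3\cdot 2^{j-2}$ for $2\le j\le k$, which is case~\ref{lem:ConvexLayersCase2}. If $|R_1|=2=2^1$ then $|R_2|=4=2^2$ and by induction $|R_j|=2^j$ for $1\le j\le k$, which in particular gives case~\ref{lem:ConvexLayersCase1}.

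The only delicate step is the key identity: one must be sure that the chosen triangulation of $Q_j$ genuinely partitions $\mathcal{X}_{j-1}$, so that the per-triangle counts add up to the total as an \emph{equality} (not merely an inequality) — and this is exactly where general position of $\mathcal{X}_j$, the inclusion $\mathcal{X}_{j-1}\subseteq\mathrm{int}(Q_j)$, and $|R_j|\ge 3$ are needed. Once that is established, the rest is a routine recurrence.
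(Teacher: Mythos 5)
Your proposal is correct and follows essentially the same route as the paper: bound $|R_1|\le 2$ using condition 3 with $\mathcal{X}_0=\emptyset$, establish $|\mathcal{X}_{j-1}|=|R_j|-2$ by counting the $|R_j|-2$ triangles of a triangulation of the $j$-th layer, and solve the resulting recurrence $|R_{j+1}|=2|R_j|$. You spell out a few details the paper leaves implicit (that the triangulation genuinely partitions $\mathcal{X}_{j-1}$, and that $|R_j|\ge 3$ for $j\ge 2$), but the argument is the same.
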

\begin{proof}
 Suppose that $R_1$ has three or more points.
 In this case, the interior of the convex hull of $R_1$ has at least one point of $\mathcal{X}$;
 this contradicts that $R_1$ is the first convex layer of $\mathcal{X}$.
 Thus $R_1=\mathcal{X}_1$, and $\mathcal{X}_1$ has one or two points. 
 This proves the lemma for $j=1$.
 
 Any triangulation of $R_{j+1}$,
 has $|R_{j+1}|-2$ triangles  and each triangle has exactly one point of $\mathcal{X}_j$ in its interior;
 thus $|\mathcal{X}_j|=|R_{j+1}|-2$.
 In particular, if $|\mathcal{X}_1|=2$ or $|\mathcal{X}_1|=1$ then $|\mathcal{X}_2|=4$ or $|\mathcal{X}_2|=3$, respectively.
 This proves the lemma for $j=2$.
 
 For the other cases, note that
 \begin{equation*}
  |R_{j+1}|= |\mathcal{X}_j|+2 = |R_{j}|+|\mathcal{X}_{j-1}|+2 = 2|R_{j}|.
 \end{equation*}
\end{proof}  

Now we define $\psi'$ on a subset of nodes of $T$ depending on whether $\mathcal{X}$ is of type 1 or type 2.
 \begin{itemize}
 \item If $\mathcal{X}$ is of type 1: $\psi'$ labels the two nodes in the $1$-level of $T_1(k)$,
 with the two points in $R_1$.
 
 \item If $\mathcal{X}$ is of type 2:
 $\psi'$ labels the node in the $1$-level of $T_2(k)$,
 with the point in $R_1$; $\psi'$ labels the three nodes in the $1$-level of $T_2(k)$,
 with the three points in $R_2$
 (such that, the left to right order of labels of the nodes in the $2$-level of $T$,
 coincides to the counterclockwise order of the points in $R_2$).
 \end{itemize}
To define $\psi'$ on the other nodes of $T$, we use the following Lemma.

 \begin{lem}
 \label{lem:Adoptables}
  Let $p_0,\dots p_{t}$ be the set of points in $R_j$ in counterclockwise order. 
  Then, the points in $R_{j+1}$ can be listed in counterclockwise order as $q_0,q_1,\dots q_{2t+1}$,
  where the points $q_{2i},q_{2i+1}$ are adoptable from $p_i$ for $0\leq i \leq t$.
 \end{lem}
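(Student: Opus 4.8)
The plan is to induct on the triangulation structure implied by condition~3 of Definition~\ref{def:nested}. Fix the layer $R_j$ with points $p_0,\dots,p_t$ in counterclockwise order, and write $P$ for the convex polygon they bound; the points of $R_{j+1}$ are the vertices of a convex polygon $Q$ that strictly contains $P$, and $\mathcal{X}_j$ lies inside $Q$. The first step is to recall the counting identity from Lemma~\ref{lem:ConvexLayers}: $|R_{j+1}| = |\mathcal{X}_j| + 2 = 2|R_j|$, so $R_{j+1}$ has exactly $2t+2$ points, matching the indices $q_0,\dots,q_{2t+1}$ we want. The heart of the argument is to show that each vertex $p_i$ of $P$ "sees" exactly two consecutive vertices of $Q$ that are adoptable from it, and that these pairs partition the vertex set of $Q$ as $i$ runs over $0,\dots,t$.

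Here is the mechanism I would use. Consider any triangulation $\mathcal{T}$ of $Q$ using only vertices of $Q$; it has $|R_{j+1}|-2 = 2t$ triangles, and by condition~3 each contains exactly one point of $\mathcal{X}_j$ in its interior, so $|\mathcal{X}_j| = 2t$, consistent with the above. Now I would argue that for each vertex $q$ of $Q$, the "ear-like" structure forces a specific combinatorial pattern: take the triangulation of $Q$ obtained as a fan from a fixed vertex, or better, argue directly. Given $p_i \in R_j$, I claim there are exactly two vertices $q_{2i}, q_{2i+1}$ of $Q$ such that $p_i$ lies in every triangle $q_a q_b q_c$ of vertices of $Q$ that has $q_{2i}, q_{2i+1}$ as two of its vertices — equivalently, $p_i$ lies in the triangle cut off by the diagonal $q_{2i}q_{2i+1}$ on the side away from the rest of $Q$. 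The way to pin this down: for each edge $e = q_\ell q_{\ell+1}$ of $Q$, and each diagonal of $Q$, consider which points of $\mathcal{X}_j$ are separated from the bulk of $Q$; a point $p_i$ is adoptable from exactly the pair $\{q_{2i},q_{2i+1}\}$ precisely when the segment $q_{2i}q_{2i+1}$ is a diagonal of $Q$ such that $p_i$ is the unique point of $\mathcal{X}_j$ on the small side. Walking the diagonals $q_0q_1, q_1q_2,\dots$ is not quite right; instead I would walk the diagonals $q_{2i}q_{2i+1}$ and show inductively that each cuts off exactly one point of $\mathcal{X}_j$, namely $p_i$, using that $P \subset Q$ and that the $p_i$ appear in the same cyclic order as the "notches" of $Q$ relative to $P$.

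Concretely, the cleanest route: because $P$ is strictly inside $Q$ and both are convex, the supporting line at each edge $p_{i-1}p_i$ of $P$... — rather, consider the $2t+2$ regions into which the lines extending the edges of $P$ divide the exterior of $P$; alternatively, project radially. I would pick the radial/angular argument: choose an interior point $o$ of $P$; the vertices $p_0,\dots,p_t$ and $q_0,\dots,q_{2t+1}$ all have well-defined angular positions about $o$. The key geometric claim is that one can index $R_{j+1}$ so that the angular interval "owned" by $p_i$ (between the bisector directions separating $p_{i-1},p_i$ and $p_i,p_{i+1}$) contains exactly $q_{2i}$ and $q_{2i+1}$, and that these two are adoptable from $p_i$: for any third vertex $q_3$ of $Q$, the triangle $q_{2i}q_{2i+1}q_3$ contains $p_i$, because $q_{2i}$ and $q_{2i+1}$ lie on opposite angular sides of $p_i$ while $q_3$ lies far enough around that $p_i$ is enclosed — here one uses convexity of $Q$ and the fact that $P\subset \mathrm{int}(Q)$.

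\medskip

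\textbf{Main obstacle.} The delicate point is establishing that the adoptable pairs are exactly the \emph{consecutive} pairs $\{q_{2i},q_{2i+1}\}$ and that they tile the cyclic sequence $q_0,\dots,q_{2t+1}$ with no overlap and no gap — i.e. that "$q_{2i-1}$ and $q_{2i}$" is never an adoptable pair for any $p$. A priori a vertex $p_i$ might be adoptable from a non-consecutive pair, or two different $p_i$'s might share a vertex of $Q$. Ruling this out requires the counting from Lemma~\ref{lem:ConvexLayers} together with a pigeonhole/extremality argument: each $p_i$ must be adoptable from \emph{at least} one pair (take the triangulation edge "closest" to $p_i$), there are $t+1$ points $p_i$ and $2t+2$ vertices $q$, and the adoptable pairs must be vertex-disjoint (if $q$ were adoptable-with-something from both $p_i$ and $p_{i'}$, then both $p_i,p_{i'}$ lie in a common small triangle, contradicting condition~3's "exactly one"). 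Vertex-disjointness of $t+1$ pairs among $2t+2$ vertices forces them to be a perfect matching, and convex position plus the cyclic compatibility with $R_j$ forces this matching to consist of the consecutive pairs, giving the desired counterclockwise listing $q_0,q_1,\dots,q_{2t+1}$. I expect writing the vertex-disjointness step and the "consecutive" step carefully — handling the wrap-around and the general-position hypotheses — to be where most of the work lies.
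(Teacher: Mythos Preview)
Your plan diverges from the paper's argument and contains a real gap at the step you flagged only parenthetically: that each $p_i$ is adoptable from \emph{at least one} pair. Your justification ``take the triangulation edge closest to $p_i$'' does not work without further input --- for a generic interior point of a convex polygon there need not be any consecutive pair $q_\ell,q_{\ell+1}$ such that the point lies in $\triangle(q_{\ell-1},q_\ell,q_{\ell+1})\cap\triangle(q_\ell,q_{\ell+1},q_{\ell+2})$ (think of the centroid of a regular hexagon). So this step genuinely needs condition~3 of Definition~\ref{def:nested}, and you have not indicated how to use it here. Your disjointness and perfect-matching deductions downstream are fine, but they are idle until existence is secured.

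The paper's route is both shorter and supplies exactly the missing piece. Let $\mathcal{T}$ be the family of ``ear'' triangles $\triangle(q_{\ell-1},q_\ell,q_{\ell+1})$ over all $\ell$. Two observations do all the work. First, for each such triangle $\triangle$, the unique point of $\mathcal{X}_j$ in its interior (guaranteed by condition~3) must lie in $R_j$, because deleting $\triangle$ from the convex hull of $R_{j+1}$ leaves a convex region, so nothing from a deeper layer can sit in $\triangle$. Second, since $|\mathcal{T}|=|R_{j+1}|=2|R_j|$ and any point lies in at most two ear triangles (the even-indexed ears are pairwise interior-disjoint, as are the odd-indexed ones), pigeonhole forces each $p_i$ to lie in exactly two \emph{consecutive} ears. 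Those two ears share a polygon edge $q_{2i}q_{2i+1}$, and membership in both ears is exactly the adoptability condition. This simultaneously gives existence, consecutiveness, and the cyclic alignment with $R_j$, without any angular or matching argument. I would recommend replacing your radial/matching outline with this ear-triangle count; the convexity-of-the-complement observation is the key idea your plan is missing.
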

 
 \begin{proof}
  Let $\mathcal{T}$ be the set of triangles 
  determined by three consecutive points of $R_{j+1}$ in counterclockwise order.
  We first show that:
  
  \begin{claim} \label{claim:Adoptables}
   Each point of $R_j$ is in exactly two consecutive triangles of $\mathcal{T}$.
  \end{claim}
  
  Assume that $j\geq 2$ (and note that Claim~\ref{claim:Adoptables} holds for $j=1$).
  Let $\triangle$ be the interior of a triangle of $\mathcal{T}$.
  By the almost convex set definition, there is one point of $\mathcal{X}_j$ in $\triangle$.
  This point must be in $R_j$, 
  since the convex hull of $R_{j+1}$ without $\triangle$ (and its boundary) is convex.
  Thus, there is one point of $R_j$ in the interior of each triangle of $\mathcal{T}$.
  As the triangles of $\mathcal{T}$ are defined by consecutive points of $R_{j+1}$, 
  each point of $R_j$ is in at most two triangles of $\mathcal{T}$.
  Thereby Claim~\ref{claim:Adoptables} follows from $|\mathcal{T}|=|R_{j+1}|=2|R_j|$.

  The two triangles of $\mathcal{T}$ that contain $p_0$, are defined by four consecutive points of $R_{j+1}$;
  let $q_0$ be the second of these points.
  Let $q_0,q_1,\dots q_{2t+1}$ be the points of $R_{j+1}$ in counterclockwise order. 
  Note that, for each $p_i$, 
  the middle two points of the four points that define the two triangles that contain $p_i$,
  are $q_{2i}$ and $q_{2i+1}$.
  Thus $q_{2i}$ and $q_{2i+1}$ are adoptable from $p_i$.
  
 \end{proof}

 Now we define $\psi'$ on the other nodes of $T$ recursively.
 For each labeled node $u$, $\psi'$ labels $u(l)$ and $u(r)$ with the two points 
 adoptable from the label of $u$.
 We do this so that, the left to right order of the labels of the nodes in the $(j+1)$-level of $T$,
 correspond to the counterclockwise order of the points in $R_{j+1}$.
 Note that $\psi'$ is nested and adoptable.
 It remains to prove that $\psi'$ is well laid.
 We prove this in Lemma \ref{lem:LabelInTriangle}.
 
 \begin{lem}
 \label{lem:ConvexHull}
  If $u$ is a node of $T$, 
  the label of every descendant of $u$ is contained in the convex hull of $R_k(u)$.
 \end{lem}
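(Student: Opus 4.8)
The plan is to work outward through the convex layers rather than to recurse on the children of $u$. Write $j$ for the level of $u$ in $T$, so that every proper descendant of $u$ lies on some level $i$ with $j<i\le k$, and for each such $i$ the set $R_i(u)$ is precisely the set of labels of the level-$i$ descendants of $u$. The key local step is the claim: \emph{if $w$ is a proper descendant of $u$ that is not a leaf, lying on level $i$ (so $j<i<k$), then $x_w\in\mathrm{conv}(R_{i+1}(u))$}. To prove it, recall that $\psi'$ is adoptable, so $x_{w(l)}$ and $x_{w(r)}$ are adoptable from $x_w$, and both belong to $R_{i+1}(u)$ since $w(l)$ and $w(r)$ are descendants of $u$ on level $i+1$. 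If we can exhibit one further point $q\in R_{i+1}(u)$ distinct from $x_{w(l)}$ and $x_{w(r)}$, then by the definition of ``adoptable'' we get $x_w\in\triangle x_{w(l)}x_{w(r)}q\subseteq\mathrm{conv}(R_{i+1}(u))$, as wanted. Such a $q$ exists: let $w'$ be the parent of $w$; since $w$ lies in the subtree rooted at $u$ and $w\ne u$, so does $w'$, and being internal it has a second child $w''\ne w$ (every internal node of $T$ has at least two children), which is therefore also a descendant of $u$. As all leaves of $T$ lie on level $k$ while $w''$ lies on level $i<k$, the node $w''$ is not a leaf, hence has a child, whose label lies in $R_{i+1}(u)$ and differs from $x_{w(l)}$ and $x_{w(r)}$ because $\psi'$ is injective.

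From here I would chain convex hulls. For every $i$ with $j+1\le i\le k-1$, each point of $R_i(u)$ is the label of a non-leaf proper descendant of $u$ on level $i$ (non-leaf because $i<k$), so $R_i(u)\subseteq\mathrm{conv}(R_{i+1}(u))$ by the local step, and taking convex hulls gives $\mathrm{conv}(R_i(u))\subseteq\mathrm{conv}(R_{i+1}(u))$. Concatenating these inclusions yields
$\mathrm{conv}(R_{j+1}(u))\subseteq\mathrm{conv}(R_{j+2}(u))\subseteq\cdots\subseteq\mathrm{conv}(R_k(u))$.

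Finally, let $v$ be an arbitrary proper descendant of $u$, lying on level $i$ with $j+1\le i\le k$. Then $x_v\in R_i(u)\subseteq\mathrm{conv}(R_i(u))\subseteq\mathrm{conv}(R_k(u))$, the last inclusion being the chain above when $i<k$ and being trivial when $i=k$. This is exactly the assertion of the lemma.

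I expect the main obstacle to be more conceptual than technical: one must resist the temptation to prove the statement by an induction that places $x_u$ inside the convex hull of the labels of its two children, which cannot work because $R_{j+1}(u)=\{x_{u(l)},x_{u(r)}\}$ has a one-dimensional convex hull. Once the layer-by-layer telescoping is adopted, the only point needing care is that every $\mathrm{conv}(R_{i+1}(u))$ used in the local step is genuinely two-dimensional, i.e.\ that $R_{i+1}(u)$ contains a third point besides $x_{w(l)}$ and $x_{w(r)}$; this is guaranteed exactly by $w$ being a non-leaf proper descendant of $u$ (equivalently by $|R_{i+1}(u)|=2^{i+1-j}\ge 4$, or $\ge 6$ when $u$ is the root in the type 2 case), and the sibling argument above secures it without a case split.
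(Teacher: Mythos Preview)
Your argument is correct and follows essentially the same route as the paper: both prove the layer-by-layer containment $R_i(u)\subseteq\mathrm{conv}(R_{i+1}(u))$ via adoptability and the existence of a third point in $R_{i+1}(u)$, then telescope. The only cosmetic difference is that the paper secures the third point by the cardinality observation ``$|R_{i}(u)|\ge 2\Rightarrow |R_{i+1}(u)|\ge 4$'' rather than your explicit sibling construction.
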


 \begin{proof}
  We claim that
  every set $R_{j-1}(u)$, with at least two points,
  is contained in the convex hull of $R_j(u)$.
  Let $p$ be a point in $R_{j-1}(u)$ and let $q$ and $q'$ be the labels of the children of the node labeled by $p$.
  By construction of $\psi'$, $q$ and $q'$ are adoptable from $p$.
  As $R_{j-1}(u)$ has at least two points, $R_{j}(u)$ has at least four points.
  Let $\triangle$ be a triangle determined by $q$, $q'$ and another point of $R_j(u)$.
  By definition of adoptable, $p$ is in the interior of $\triangle$ 
  and in consequence in the interior of the convex hull of $R_j$.
  An inductive application of the previous claim proves this lemma.
 \end{proof}
 
 \begin{lem}
 \label{lem:LabelInTriangle}
  Let $u$ be a node of $T$. 
  Then $x_u$ is  in the intersection of the triangle determined  
  by $\previous[R_k(u)]$, $\first[R_k(u)]$ and $\last[R_k(u)]$ 
  and the triangle determined by $\first[R_k(u)]$, $\last[R_k(u)]$ and $\next[R_k(u)]$.
 \end{lem}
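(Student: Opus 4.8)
The plan is to prove the statement by induction on the distance of $u$ from the leaves, i.e.\ by considering the nodes level by level starting from the bottom (level $k$) and working towards the root. For a node $u$ at level $k$ (a leaf), the label $x_u$ \emph{is} the point of $R_k$ in question, and $\first[R_k(u)]=\last[R_k(u)]=x_u$; here the two ``triangles'' degenerate, so the claim must be read as the trivial statement that $x_u$ lies on $R_k(u)=\{x_u\}$, which we handle as a base case. For the inductive step, fix a node $u$ at level $j<k$ and write $q:=x_{u(l)}$, $q':=x_{u(r)}$, which are the labels of the two children of $u$; by construction of $\psi'$ these lie in $R_{j+1}$ and are adoptable from $x_u$. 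Note that $R_k(u)=R_k(u(l))\cup R_k(u(r))$, that $\first[R_k(u)]=\first[R_k(u(l))]$ and $\last[R_k(u)]=\last[R_k(u(r))]$, and that $\previous[R_k(u)]=\previous[R_k(u(l))]$ and $\next[R_k(u)]=\next[R_k(u(r))]$.

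First I would use the adoptability of $q,q'$ from $x_u$ together with Lemma~\ref{lem:ConvexHull}: since $\next[R_k(u)]$ and $\previous[R_k(u)]$ are points of $R_k$ not among the descendants of $u$, while $\first[R_k(u)]$ and $\last[R_k(u)]$ are among them, I want to relate the position of $x_u$ inside the convex hull of $R_k(u)$ (guaranteed by Lemma~\ref{lem:ConvexHull}) to the two claimed triangles. The key geometric observation is that $R_k(u)$ is a contiguous arc of $R_k$ in counterclockwise order (this follows because $\psi'$ is nested — the labels of the descendants of $u$ in level $k$ form a consecutive block in the left-to-right order, hence a consecutive arc of the convex polygon $R_k$), so its convex hull is a convex polygon whose boundary consists of the chord from $\first[R_k(u)]$ to $\last[R_k(u)]$ together with the arc between them. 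Thus $x_u\in\mathrm{conv}(R_k(u))$ means $x_u$ lies on the inner side of the supporting line of $R_k$ through $\first[R_k(u)]$ and $\last[R_k(u)]$, which is exactly one of the two half-plane conditions needed for each triangle.

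The remaining constraints — that $x_u$ is on the correct side of the line through $\previous[R_k(u)]$ and $\first[R_k(u)]$, on the correct side of the line through $\last[R_k(u)]$ and $\next[R_k(u)]$, and on the correct sides of the two diagonals of the claimed triangles emanating from $\first[R_k(u)]$ and $\last[R_k(u)]$ — I would obtain by applying the inductive hypothesis to $u(l)$ and $u(r)$ and then transporting those containments upward using adoptability. Specifically, the inductive hypothesis places $x_q=q$ in a small region near the arc-end $\first[R_k(u)]$ and $x_{q'}=q'$ near $\last[R_k(u)]$; since $x_u$ is in the interior of every triangle $q\,q'\,w$ with $w\in R_{j+1}\setminus\{q,q'\}$, and in particular of triangles where $w$ is chosen to witness the extreme directions, $x_u$ is ``pinched'' between $q$ and $q'$ and therefore inherits the half-plane memberships established for $q$ and $q'$ at the previous level. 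Here I would need the fact — provable from the almost-convex structure, or directly from the $2\implies$ later parts, but better kept self-contained — that the polygon $R_{j+1}$ lies entirely inside the triangle $\previous[R_k(u)]\,\first[R_k(u)]\,\last[R_k(u)]$ union $\first[R_k(u)]\,\last[R_k(u)]\,\next[R_k(u)]$, so that adoptability (which puts $x_u$ in some triangle with vertices in $R_{j+1}$) forces $x_u$ into that union as well.

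I expect the main obstacle to be the bookkeeping in the inductive step: carefully matching up which of the six or so half-planes defining the two triangles at level $j$ comes from Lemma~\ref{lem:ConvexHull}, which comes verbatim from the inductive hypothesis at level $j+1$ via the identifications of $\first$, $\last$, $\previous$, $\next$ across the parent–child relation, and which requires a genuinely new argument using adoptability to "pinch" $x_u$ between $q$ and $q'$. In particular, the diagonal lines through $\first[R_k(u)]$ and $\last[R_k(u)]$ — the ones separating the two triangles of the claimed intersection — are the subtle ones, since they are not simply supporting lines of a convex layer; showing $x_u$ lies on the correct side of these is where the adoptability of $q,q'$ from $x_u$, combined with the positions of $q$ and $q'$ relative to those same lines (obtained from the level-$(j+1)$ hypothesis, noting $\last[R_k(u(l))]$ and $\first[R_k(u(r))]$ are interior points of the arc $R_k(u)$), must be used most delicately. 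Degenerate base cases at level $k$ (and, for type-2 sets, the handling of the root and level $2$) will also need a short separate check.
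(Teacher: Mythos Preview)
Your inductive plan has a genuine gap, and it is exactly the one you flag as ``subtle.'' The auxiliary fact you say you would need---that the whole layer $R_{j+1}$ lies in the union of the two target triangles $\previous[R_k(u)]\,\first[R_k(u)]\,\last[R_k(u)]$ and $\first[R_k(u)]\,\last[R_k(u)]\,\next[R_k(u)]$---is simply false: $R_{j+1}$ is an entire convex layer and most of its points label descendants of nodes at level $j$ other than $u$, hence lie nowhere near those two triangles. More importantly, even with the correct local information, the induction does not close. The hypothesis for $u(l)$ and $u(r)$ controls $q=x_{u(l)}$ and $q'=x_{u(r)}$ only with respect to diagonals through \emph{nearby} points of $R_k$ (such as $\last[R_k(u(l))]$ or $\first[R_k(u(r))]$), whereas the decisive constraints for $u$ are the long diagonals $\previous[R_k(u)]\,\last[R_k(u)]$ and $\first[R_k(u)]\,\next[R_k(u)]$, each joining a point at one end of the arc $R_k(u)$ to a point just past the other end. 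Nothing in the children's data, in Lemma~\ref{lem:ConvexHull}, or in adoptability (which only places $x_u$ in triangles with vertices in $R_{j+1}$) pins down on which side of those long $R_k$-diagonals the points $q,q',x_u$ fall. You also never invoke the nested almost convex hypothesis on triangles with vertices in $R_k$, and that hypothesis is exactly what is needed.

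The paper's argument is not inductive. It fixes the level $j$ of $u$, forms the auxiliary set $R'_k\subset R_k$ consisting of $\first[R_k(v)]$ and $\last[R_k(v)]$ for every $v$ at level $j$, and looks at the family $\mathcal{T}$ of triangles spanned by three consecutive points of $R'_k$. Using Lemma~\ref{lem:ConvexHull} one sees that each such triangle lies in the region obtained from $\operatorname{conv}(\mathcal{X})$ by deleting all the hulls $\operatorname{conv}(R_k(v))$, so the unique point of $\mathcal{X}$ it contains (by the nested almost convex property, since its vertices lie in $R_k$) must belong to $R_j$. A count $|\mathcal{T}|=2|R_j|$ then forces every point of $R_j$ to lie in exactly two consecutive triangles of $\mathcal{T}$. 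Finally, adoptability places $x_u$ in the level-$(j{+}1)$ region $\triangle'$, and one checks that only two triangles of $\mathcal{T}$ meet $\triangle'$---precisely the two triangles in the statement. The key step you are missing is this global counting use of the almost convex property on $R_k$-triangles; your scheme tries to get by with adoptability and convex-hull containment alone, and that is not enough to control the long diagonals.
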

 \begin{proof}
  Let $j$ be the index such that the $j$-level of $T$ contains $u$.
  Let $R'_k$ be the set that contains  $\first(R_k(v))$ and $\last(R_k(v))$
  for all nodes $v$ in the $j$-level of $T$.
   Let $\mathcal{T}$ be the set of triangles 
   determined by three consecutive points of $R'_{k}$ in counterclockwise order.
   We first show the following claim.  
    \begin{claim}\label{cl:LabelInTriangle}
        Each point of $R_j$ is in exactly two consecutive triangles of $\mathcal{T}$.
    \end{claim}
  
  Note that every point of $\mathcal{X}\setminus \mathcal{X}_j$,
  is the label of some descendant of a node $v$ in the $j$-level of $T$.
  Thus, by Lemma \ref{lem:ConvexHull}, every point of $\mathcal{X}\setminus \mathcal{X}_j$
  is in the convex hull of $R_k(v)$ for some node $v$ in the $j$-level of $T$.
  Let $\mathcal{A}$ be the region obtained from the convex hull of $\mathcal{X}$, 
  by removing the convex hull of $R_k(v)$ for each $v$ in the $j$-level of $T$.
  Note that the set of points of $\mathcal{X}$ that are in $\mathcal{A}$ is $\mathcal{X}_j$.

  Let $\triangle$ be the interior of a triangle of $\mathcal{T}$.
  By the nested almost convex set definition, there is one point of $\mathcal{X}$ in $\triangle$.
  As $\triangle$ is contained in $\mathcal{A}$, this point must be in $\mathcal{X}_j$.
  This point must also be in $R_j$, 
  since $\mathcal{A}$ without $\triangle$ (and its boundary) is convex.
  Thus, there is one point of $R_j$ in the interior of each triangle of $\mathcal{T}$.
  As the triangles of $\mathcal{T}$ are defined by consecutive points of $R'_{k}$, 
  each point of $R_j$ is in at most two triangles of $\mathcal{T}$.
  Thereby Claim~\ref{cl:LabelInTriangle} follows from $|\mathcal{T}|=|R'_{k}|=2|R_j|$. 
 
  Let $\triangle'$ be the intersection of the triangle determined by 
  $\previous[R_{j+1}(u)]$, $\first[R_{j+1}(u)]$ and $\last[R_{j+1}(u)]$, 
  with the triangle determined by $\first[R_{j+1}(u)]$, $\last[R_{j+1}(u)]$ and $\next[R_{j+1}(u)]$.
  Note that $\first[R_{j+1}(u)]$ and $\last[R_{j+1}(u)]$ are the labels of the children of $u$.
  By definition of $\psi'$, $x_u$ is in the interior of every triangle determined by 
  $\first[R_{j+1}(u)]$, $\last[R_{j+1}(u)]$ and every other point of $R_{j+1}$;
  thus $x_u$ is in $\triangle'$.
  By Claim~\ref{cl:LabelInTriangle},
  $x_u$ is in the interior of two triangles of $\mathcal{T}$,
  but there are only two triangles of $\mathcal{T}$ that intersect $\triangle'$;
  these are the triangles determined by 
  $\previous[R_k(u)]$, $\first[R_k(u)]$ and $\last[R_k(u)]$, 
  and the triangle determined by 
  $\first[R_k(u)]$, $\last[R_k(u)]$, $\next[R_k(u)]$.
 \end{proof}
  
\subsubsection*{\ref{thm:characterization-region} $\implies$ \ref{thm:characterization-order}}
  
In this part we assume that there is a labeling $\psi'$ of $\mathcal{X}$
that is nested, adoptable and well laid;
and we prove that $\psi'$ is an internal separation and an external separation.
  
  \begin{lem}
   $\psi'$ is an internal separation.\label{lem:internal-separation}
  \end{lem}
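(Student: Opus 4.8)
The plan is to prove the statement pointwise, by a direct geometric argument. Fix an internal node $u$ of $T$ and abbreviate $f=\first[R_k(u)]$, $l=\last[R_k(u)]$, $P=\previous[R_k(u)]$, $Q=\next[R_k(u)]$, $m=\last[R_k(u(l))]$ and $m'=\first[R_k(u(r))]$. Since $\psi'$ is nested, the subtree of $u$ spans a contiguous block of the $k$-level, so $R_k(u)$ is exactly the counterclockwise arc $f,\dots,m,m',\dots,l$ of $R_k$, the point $m'$ immediately follows $m$ on $R_k$, and $P,Q$ are the vertices of $R_k$ flanking this arc. It then suffices to show that for every $p\in\mathcal{X}\setminus\mathcal{X}_u$, every $a\in\overline{\mathcal{X}_{u(l)}}$ and every $b\in\overline{\mathcal{X}_{u(r)}}$, the triple $(a,b,p)$ is counterclockwise.

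The first step is to reduce everything to points of $R_k$. I would use two ingredients. First, the signed area $\operatorname{orient}(a,b,p)$ is an affine function of each of its three arguments when the other two are fixed; hence the weaker predicate ``$\operatorname{orient}(a,b,p)\ge 0$'' holds for all $a$ in a set $A$ (with $b,p$ fixed) as soon as it holds at the vertices of the convex hull of $A$, and likewise for $b$ and for $p$, the three reductions being carried out one variable at a time. Second, Lemma~\ref{lem:ConvexHull} together with the well-laid property pins each point onto (the convex hull of) points of $R_k$: $\overline{\mathcal{X}_{u(l)}}\subseteq\mathrm{conv}\!\big(R_k(u(l))\cup\{x_{u(l)}\}\big)$ with $x_{u(l)}\in\triangle(P,f,m)\cap\triangle(f,m,m')$; symmetrically $\overline{\mathcal{X}_{u(r)}}\subseteq\mathrm{conv}\!\big(R_k(u(r))\cup\{x_{u(r)}\}\big)$ with $x_{u(r)}\in\triangle(m,m',l)\cap\triangle(m',l,Q)$; and every point of $\mathcal{X}\setminus\mathcal{X}_u$ lies in the convex hull of finitely many points of $R_k$ (apply Lemma~\ref{lem:ConvexHull} and the well-laid property to the node carrying the point; for $x_u$ this gives $x_u\in\triangle(P,f,l)\cap\triangle(f,l,Q)$, while a node whose subtree is disjoint from, or merely adjacent to, the subtree of $u$ is handled the same way after choosing the appropriate one of its two well-laid triangles). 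Iterating the affine reduction with these containments, it suffices to treat triples $(a,b,p)$ in which $a$ lies on an arc of $R_k$ ending at (or just past) $m$, $b$ on an arc starting at (or just before) $m'$, and $p$ on $R_k$ outside the open arc strictly between $f$ and $l$ — together with a handful of auxiliary points, each the crossing point of two chords of $R_k$ that come from the thin regions $\triangle(P,f,m)\cap\triangle(f,m,m')$, $\triangle(m,m',l)\cap\triangle(m',l,Q)$ and $\triangle(P,f,l)\cap\triangle(f,l,Q)$.

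For a triple of honest points of $R_k$ the sign of $\operatorname{orient}$ is read off from cyclic order: $(a,b,p)$ is counterclockwise precisely when $a,b,p$ occur in this counterclockwise cyclic order around $R_k$. Labelling the positions of $P,f,m,m',l,Q$ as $0,1,\dots$ around $R_k$, the constraints above place $\operatorname{pos}(a)$ in an initial block, $\operatorname{pos}(b)$ in the following block, and $\operatorname{pos}(p)$ in the terminal block (the part outside $R_k(u)$, together with the two endpoints $f,l$), and one then checks case by case that these three positions always occur in increasing cyclic order, with equality only in genuinely degenerate configurations (the artefact coincidences $a=b$ or $a=p$ produced by the reduction correspond to $\operatorname{orient}=0$ and do no harm). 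An auxiliary crossing point is dispatched by writing it, for the particular $p$ being tested, as a convex combination along whichever of its two defining chords keeps both endpoint-triples in the right cyclic order. Having thereby obtained $\operatorname{orient}(a,b,p)\ge 0$ for all the original $a,b,p$, the conclusion follows: since $a,b,p\in\mathcal{X}$ and $\mathcal{X}$ is in general position, $p$ cannot lie on the line through $a$ and $b$, so $\operatorname{orient}(a,b,p)>0$, i.e.\ $p$ is strictly to the left of the directed line from $a$ to $b$.

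I expect the main obstacle to be the bookkeeping in the second step: for each of $x_{u(l)}$, $x_{u(r)}$ and $x_u$ one must retain enough of the well-laid information that the arcs into which $a$, $b$ and $p$ get pushed are correctly ordered — in particular $x_u$ is squeezed next to both $f$ and $l$, so bounding it by a single well-laid triangle is too lossy and one is forced to use the intersection of its two well-laid triangles, a triangle whose only vertices are $f$, $l$ and one chord-crossing point. Checking that the chord choices can be made consistently across all the cases — equivalently, that the well-laid hypothesis is exactly strong enough — is the crux of the argument and where the real work lies.
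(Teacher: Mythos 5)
Your route is genuinely different from the paper's. The paper fixes the single chord $\ell$ from $\first[R_k(u(l))]$ to $\last[R_k(u(r))]$, uses Lemmas~\ref{lem:ConvexHull} and~\ref{lem:LabelInTriangle} to place $\mathcal{X}\setminus\mathcal{X}_u$ strictly to its left and $\overline{\mathcal{X}_{u(l)}}\cup\overline{\mathcal{X}_{u(r)}}$ strictly to its right, and then shows that no $(\overline{\mathcal{X}_{u(l)}},\overline{\mathcal{X}_{u(r)}})$-line can cross $\ell$, because the slope of any such line is pinned between the slopes of the two diagonals of the quadrilateral $q_1q_2q_3q_4$ and the line must therefore cross both convex chains $P_1$ and $P_2$, which lie entirely to the right of $\ell$. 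You use the same two ingredients (Lemma~\ref{lem:ConvexHull} plus well-laidness) but replace the chord-crossing argument by multilinearity of the orientation determinant: push $a$, $b$, $p$ to the extreme points of explicit convex bodies and read off signs from cyclic order on $R_k$. In principle this can work, and your final bootstrap from $\operatorname{orient}\ge 0$ to $\operatorname{orient}>0$ via general position is sound.

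However, there is a genuine gap, and you name it yourself: the entire content of the proof is the case analysis over extreme-point triples, and you do not carry it out. For triples of honest vertices of $R_k$ the cyclic-order criterion is indeed immediate, but the extreme points of your bounding regions also include chord-crossing points, and for those your method succeeds only if for each triple at least one of the two defining chords of each crossing point yields nonnegative orientations at both of its endpoints. That is not automatic: for instance, with $a$ interior to the left arc, $b$ the crossing point of the right-hand thin region expanded along its chord through $\next[R_k(u)]$, and $p$ the crossing point of $x_u$'s region expanded along its chord through $\previous[R_k(u)]$ and $\last[R_k(u)]$, one of the four resulting vertex triples is clockwise, and the estimate is only rescued by switching to the other chord. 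Whether a consistent system of chord choices exists across all cases is precisely the lemma, and it is left unverified. A second concrete omission: your classification of where $p\in\mathcal{X}\setminus\mathcal{X}_u$ can live covers $x_u$ and nodes whose subtrees are disjoint from or adjacent to that of $u$, but not the proper ancestors $v$ of $u$. For an ancestor, Lemma~\ref{lem:ConvexHull} is useless (its arc $R_k(v)$ contains all of $R_k(u)$), so only the well-laid triangles of $x_v$ are available, and their supporting chords span the whole of $u$'s arc, generating yet more crossing-point cases. Until these verifications are actually performed, the argument is a plausible plan rather than a proof; the paper's single-chord argument disposes of all of this bookkeeping at once.
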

  \begin{proof}
   Let $u$ be a node of $T$ and recall that $u(l)$, $u(r)$ are the left and right children of $u$, respectively.
   We need to prove that every point in $\mathcal{X}/\mathcal{X}_u$ is to the left of every
   $(\overline{\mathcal{X}_{u(l)}},\overline{\mathcal{X}_{u(r)}})$-line.  
   
   Let $\ell$ be the directed segment from $\first[R_k(u(l))]$ to $\last[R_k(u(r))]$.
   By Lemma~\ref{lem:LabelInTriangle}, 
   each point in $\mathcal{X}/\mathcal{X}_u$ is in the interior of a triangle
   whose vertices are to the left of, or on $\ell$;
   thus every point in $\mathcal{X}/\mathcal{X}_u$ is to the left of $\ell$.
   By Lemma~\ref{lem:ConvexHull}, 
   every point in $\overline{\mathcal{X}_{u(l)}}\cup \overline{\mathcal{X}_{u(r)}}$ 
   is to the right of $\ell$.
   We claim that:   
   \begin{claim} \label{claim:internal-separation}
    No $(\overline{\mathcal{X}_{u(l)}},\overline{\mathcal{X}_{u(r)}})$-line intersects $\ell$.
   \end{claim}
   As the end points of $\ell$, $\first[R_k(u(l))]$ and $\last[R_k(u(r))]$,
   are in the boundary of the convex hull of $\mathcal{X}$;
   to prove that every point in $\mathcal{X}/\mathcal{X}_u$ is to the left of every
   $(\overline{\mathcal{X}_{u(l)}},\overline{\mathcal{X}_{u(r)}})$-line,
   it is enough to show Claim~\ref{claim:internal-separation}.
   
   Let $P_1$ be the polygonal chain that starts at $q_1:=\first[R_k(u(l))]$,
   follows the points of $R_k(u(l))$ in counterclockwise order, and ends at $q_2:=\last[R_k(u(l))]$.
   Similarly,  let $P_2$ be the polygonal chain that starts at $q_3:=\first[R_k(u(r))]$,
   follows the points of $R_k(u(r))$ in counterclockwise order, and ends at $q_4:=\last[R_k(u(r))]$.
   To prove Claim~\ref{claim:internal-separation} it is enough to show that
   every $(\overline{\mathcal{X}_{u(l)}},\overline{\mathcal{X}_{u(r)}})$-line intersects both $P_1$ and $P_2$.
   
   Let $q$ be the intersection point of the diagonals of the quadrilateral defined by $q_1$, $q_2$, $q_3$ and $q_4$.
   By Lemma~\ref{lem:ConvexHull} and Lemma~\ref{lem:LabelInTriangle},
   $\overline{\mathcal{X}_{u(l)}}$ is contained in the convex hull of $P_1\cup \lbrace q \rbrace$,
   and $\overline{\mathcal{X}_{u(r)}}$ is contained in the convex hull of $P_2\cup \lbrace q \rbrace$.
   Let $\ell'$ be an $(\overline{\mathcal{X}_{u(l)}},\overline{\mathcal{X}_{u(r)}})$-line.
   Note that the slope of $\ell'$,
   is in the range from the slope of the line define by $q_1$ and $q_3$,
   to the slope of the line define by $q_2$ and $q_4$, in counterclockwise order.
   Thus $\ell'$ intersects both $P_1$ and $P_2$.
   
  \end{proof}

  \begin{lem}
   $\psi'$ is an external separation. \label{lem:external-separation}
  \end{lem}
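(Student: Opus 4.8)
The plan is to mirror the structure of the proof of Lemma~\ref{lem:internal-separation}, but adapt it to the ``external'' cones emanating from $x_u$ rather than the ``internal'' strip between the two subtrees. Fix a node $u$ in the $j$-level of $T$. We must show that every point of $\mathcal{X}/\overline{\mathcal{X}_u}$ lies to the left of every $(\overline{\mathcal{X}_{u(l)}},\{x_u\})$-line and to the left of every $(\{x_u\},\overline{\mathcal{X}_{u(r)}})$-line. By symmetry (reflecting the counterclockwise orientation) it suffices to treat one of the two cases, say the $(\overline{\mathcal{X}_{u(l)}},\{x_u\})$-lines.

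First I would set up the reference segment. Let $\ell$ be the directed segment from $q_1 := \first[R_k(u(l))]$ to $x_u$. Since $\psi'$ is well laid (Lemma~\ref{lem:LabelInTriangle}), $x_u$ lies in the triangle determined by $\previous[R_k(u)]$, $\first[R_k(u)]$, $\last[R_k(u)]$; note $\first[R_k(u)] = \first[R_k(u(l))] = q_1$. From this one reads off that every point of $\mathcal{X}/\overline{\mathcal{X}_u}$ — which by Lemma~\ref{lem:ConvexHull} and the well laid property sits inside triangles with vertices among the $\first$/$\last$ points of the $j$-level and their neighbours — lies to the left of $\ell$, while by Lemma~\ref{lem:ConvexHull} every point of $\overline{\mathcal{X}_{u(l)}}$ lies on or to the right of $\ell$ (it is contained in the convex hull of $R_k(u(l))$, which lies to the right of the chord from $q_1$). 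Since the endpoints $q_1$ and $x_u$ are ``extreme'' in the sense needed — $q_1$ is on the hull of $\mathcal{X}$ and $x_u$ separates $\overline{\mathcal{X}_{u(l)}}$ from the rest — it is then enough to prove the analogue of Claim~\ref{claim:internal-separation}: \emph{no $(\overline{\mathcal{X}_{u(l)}},\{x_u\})$-line intersects $\ell$.}

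To prove that claim I would again pass to polygonal chains. Let $P_1$ be the chain through the points of $R_k(u(l))$ in counterclockwise order from $q_1 = \first[R_k(u(l))]$ to $q_2 := \last[R_k(u(l))]$, and let $P_0$ be the two-edge chain $\previous[R_k(u)]$, $\first[R_k(u)] = q_1$, $\last[R_k(u)]$ coming from the well-laid triangle containing $x_u$. By Lemma~\ref{lem:ConvexHull} together with Lemma~\ref{lem:LabelInTriangle}, $\overline{\mathcal{X}_{u(l)}}$ is contained in the convex hull of $P_1$ together with the relevant intersection point $q$ of the two bounding diagonals, and $x_u$ lies inside the well-laid triangle. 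A directed line from a point of $\overline{\mathcal{X}_{u(l)}}$ to $x_u$ therefore has slope confined to the angular range between the line through $q_1$ and $\last[R_k(u)]$ and the line through $q_2$ and $\previous[R_k(u)]$, taken in counterclockwise order; any such line must cross both $P_1$ and the chain $P_0$, hence it cannot cross the segment $\ell$ from $q_1$ to $x_u$, which lies ``between'' them. Combining: every point of $\mathcal{X}/\overline{\mathcal{X}_u}$ is left of $\ell$, $\ell$ is not crossed by any $(\overline{\mathcal{X}_{u(l)}},\{x_u\})$-line, and the relevant endpoints are hull vertices of $\mathcal{X}$, so every such point is left of every $(\overline{\mathcal{X}_{u(l)}},\{x_u\})$-line. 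The symmetric argument handles the $(\{x_u\},\overline{\mathcal{X}_{u(r)}})$-lines, and since $u$ was arbitrary this shows $\psi'$ is an external separation.

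The main obstacle I anticipate is pinning down exactly which angular range the relevant directed lines occupy and verifying that the well-laid property really forces the chain through $\previous[R_k(u)]$, $q_1$, $\last[R_k(u)]$ to be positioned so that no $(\overline{\mathcal{X}_{u(l)}},\{x_u\})$-line can sneak across $\ell$ — this is the place where the interaction between ``well laid'' (which controls where $x_u$ sits relative to the outer layer $R_k$) and ``adoptable'' (which controls where the subtree's points sit relative to their parent) must be used simultaneously, much as in Lemma~\ref{lem:internal-separation}, and the endpoint $x_u$ being an interior rather than a hull point requires the extra care of noting that it is still a vertex of the appropriate well-laid triangle.
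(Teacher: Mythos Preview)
Your proposal has a genuine gap. The whole force of the internal-separation argument rested on the fact that the reference segment $\ell$ was a \emph{chord} of the convex hull of $\mathcal{X}$: both endpoints $\first[R_k(u(l))]$ and $\last[R_k(u(r))]$ are hull vertices, so $\ell$ splits the hull into two regions, and a line that stays on one side of $\ell$ inside the hull cannot reach points on the other side. In your external argument you take $\ell$ from $q_1=\first[R_k(u(l))]$ to $x_u$, but $x_u$ is an interior point of $\mathcal{X}$, not a hull vertex. Worse, every $(\overline{\mathcal{X}_{u(l)}},\{x_u\})$-line passes \emph{through} $x_u$ and hence shares an endpoint with $\ell$; once such a line leaves $x_u$ on the far side, the ``does not cross $\ell$'' condition tells you nothing about where it goes. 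So even if your Claim (no such line crosses $\ell$) were verified, it would not imply that the points of $\mathcal{X}\setminus\overline{\mathcal{X}_u}$ sitting beyond $x_u$ are on the correct side. Your final paragraph notices the issue but does not resolve it: calling $x_u$ ``a vertex of the appropriate well-laid triangle'' does not substitute for being a vertex of the convex hull of $\mathcal{X}$.

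The paper's proof takes a different route that you should try. It first encloses $\mathcal{X}\setminus\overline{\mathcal{X}_u}$ in the convex hull of the polygonal chain $P$ along $R_k$ from $\next[R_k(u)]$ counterclockwise to $\previous[R_k(u)]$ (this uses well-laid-ness), which reduces the whole statement to a single sidedness: $x_u$ must lie to the right of the directed line from $\last[R_k(u(l))]$ to $\next[R_k(u)]$. That single fact is then proved by induction on the layer index $i$ from $j+1$ up to $k$, letting $\ell_i$ be the line from $\last[R_i(u(l))]$ to $\next[R_i(u)]$. The base case $i=j+1$ is exactly the adoptability of $x_{u(l)},x_{u(r)}$ from $x_u$. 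For the step one looks at the convex quadrilateral on the four children of $\last[R_i(u(l))]$ and $\next[R_i(u)]$, observes that any triangulation of it has two triangles and hence (by the almost-convex property) contains exactly two points of $\mathcal{X}_i$, which must be $\last[R_i(u(l))]$ and $\next[R_i(u)]$ themselves; consequently $x_u$ cannot lie between $\ell_i$ and $\ell_{i+1}$. This is where adoptability and well-laid-ness interact, but through an inductive descent along layers rather than a single chord argument.
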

  \begin{proof} 
  
  Let $u$ be a node of $T$.
  We need to prove that every point in $\mathcal{X}/\overline{\mathcal{X}_u}$,
  is to the left of every 
  $(\overline{\mathcal{X}_{u(l)}},\lbrace x_u\rbrace)$-line and to the left of every
  $(\lbrace x_u\rbrace,\overline{\mathcal{X}_{u(r)}})$-line. 
  We prove that every point in $\mathcal{X}/\overline{\mathcal{X}_u}$  is to the left of every 
  $(\overline{\mathcal{X}_{u(l)}},\lbrace x_u\rbrace)$-line.
  That every point in $\mathcal{X}/\overline{\mathcal{X}_u}$  is to the left of every 
  $(\lbrace x_u\rbrace,\overline{\mathcal{X}_{u(r)}})$-line can be proven in a similar way.
  
  Let $P$ be the polygonal chain that starts at $\next[R_k(u)]$, 
  follows the points of $R_k$ in counterclockwise order,
  and ends at $\previous[R_k(u)]$.
  Note that, by Lemma \ref{lem:LabelInTriangle}, $\mathcal{X}/\overline{\mathcal{X}_u}$ is contained in the convex hull of $P$.
  Thus, to prove that every point in $\mathcal{X}/\overline{\mathcal{X}_u}$  is to the left of every 
  $(\overline{\mathcal{X}_{u(l)}},\lbrace x_u\rbrace)$-line, 
  it is enough to show that $x_u$ is to the right of the directed line 
  from $\last[R_k(u(l))]$ to $\next[R_k(u)]$. See Figure \ref{fig:external}.
  
  \begin{figure}[ht]
    \centering \hspace{-1.5 cm } 
    \begin{subfigure}[b]{0.3\textwidth}
	\includegraphics[scale=.48]{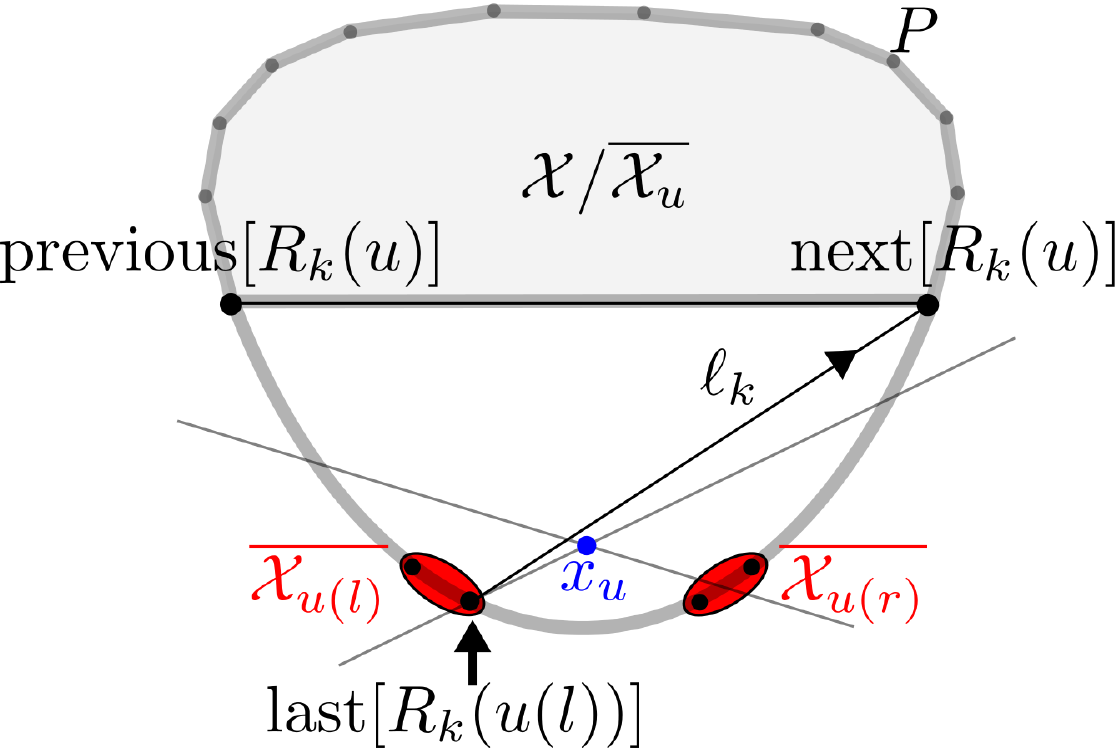}
	\label{fig:externala}
    \end{subfigure}
    \hspace{3 cm } 
    \begin{subfigure}[b]{0.3\textwidth}
	\includegraphics[scale=.38]{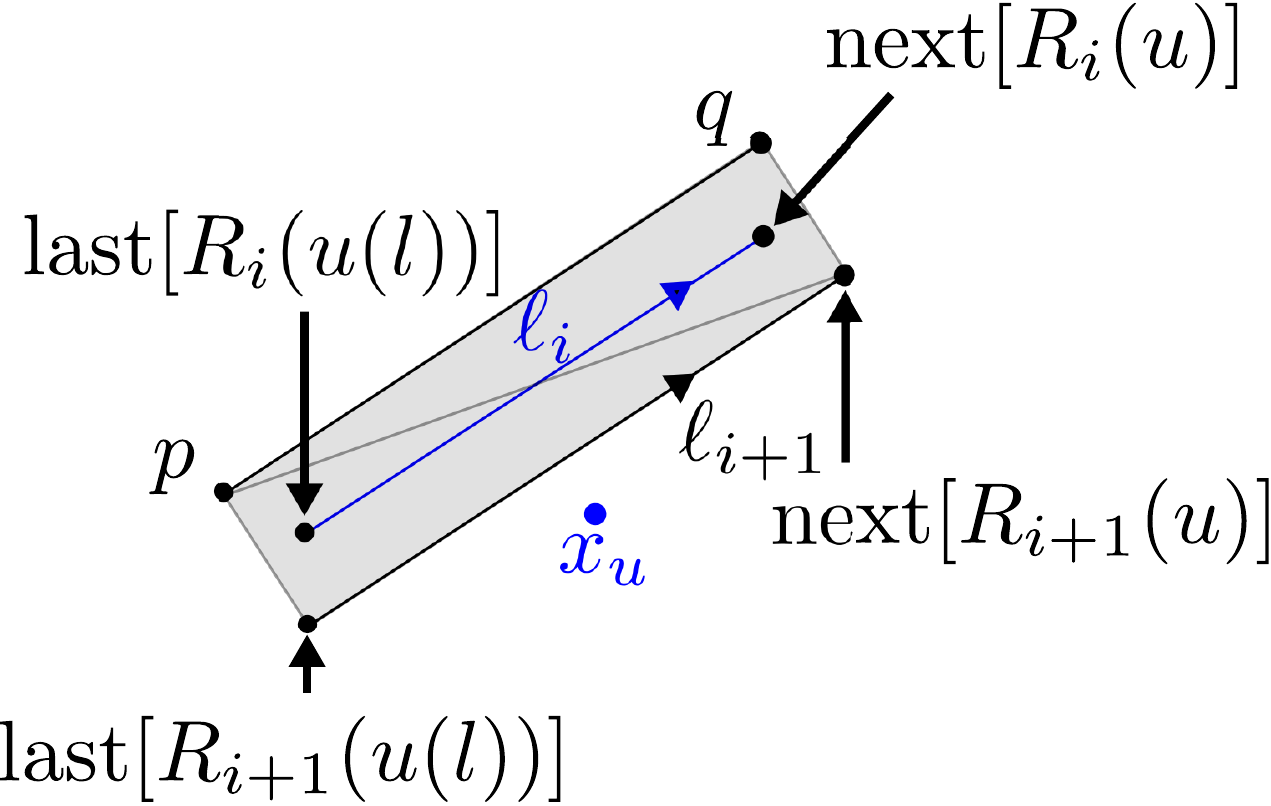}
	\label{fig:externalb}
    \end{subfigure}
    \caption{Illustration of the proof of Lemma~\ref{lem:external-separation}}\label{fig:external}
\end{figure}
  
  Let $j$ be the index such that the $j$-level of $T$ contains $u$. 
  For $j < i \leq k$, let $\ell_i$ be the directed line from $\last[R_i(u(l))]$ to $\next[R_i(u)]$.
  We show that $x_u$ is to the right of $\ell_i$ by induction. 
  As $x_{u(l)}$ and $x_{u(r)}$ are adoptable from $x_u$, and $x_{u(l)}=\last[R_{j+1}(u(l))]$;
  $x_u$ is in the interior of the triangle determined by 
  $\last[R_{j+1}(u(l))]$, $x_{u(r)}$ and $\next[R_{j+1}(u)]$.
  Thus the induction holds for $i=j+1$.
  Suppose that $x_u$ is to the right of $\ell_i$. 
  Let $\last[R_{i+1}(u(l))]$ and $p$ be the two children of $\last[R_{i}(u(l))]$.
  Let $\next[R_{i+1}(u)]$ and $q$ be the two children of $\next[R_{i}(u)]$.
  Let $\Box$ be the quadrilateral determined by $\last[R_{i+1}(u(l))]$, $p$, $q$ and $\next[R_{i+1}(u)]$.
  As $\last[R_{i+1}(u(l))]$, $p$, $q$ and $\next[R_{i+1}(u)]$ are in $R_{i+1}$, 
  and any triangulation of $\Box$ has two triangles;
  there are two points of $\mathcal{X}_i$ in $\Box$.
  As those points are $\last[R_i(u(l))]$ and $\next[R_i(u)]$, 
  $x_u$ is not in the interior of $\Box$. 
  Thus $x_u$ is not between $\ell_i$ and $\ell_{i+1}$, 
  and therefore $x_u$ is to the right of $\ell_{i+1}$.
  
  \end{proof}

\subsection*{\ref{thm:characterization-order} $\implies$ \ref{thm:characterization-def}}
In this part we finish the proof of Theorem~\ref{thm:characterization}.
We assume that there is a labeling $\psi'$ of $\mathcal{X}$
that is an internal separation and an external separation, 
and we prove that $\mathcal{X}$ is a nested almost convex set.
For this it is enough to prove Lemma~\ref{lem:DeterminesOrderTipe}.
As consequence of Lemma~\ref{lem:DeterminesOrderTipe} and Theorem~\ref{thm:characterization},
Theorem~\ref{thm:PossibleAlmosConvex} holds.

\begin{lem}\label{lem:DeterminesOrderTipe}
 Let $\mathcal{X}$ be an $n$-point set that admits a labeling $\psi:\mathcal{X} \rightarrow T$ 
 that is an internal separation and an external separation.
 Then the order type of $\mathcal{X}$ is is determined by $T$ and:
 \begin{itemize}
  \item If $n=2^{k-1}-2$, then $\mathcal{X}$ has the same order type than 
  any $n$-point set obtained from  Construction 1.
  \item If $n=3\cdot2^{k-1}-2$, then $\mathcal{X}$ has the same order type than 
  any $n$-point set obtained from  Construction 2.
 \end{itemize}
\end{lem}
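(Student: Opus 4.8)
### Proof proposal for Lemma~\ref{lem:DeterminesOrderTipe}

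\textbf{Overview of the approach.}
The plan is to show that the data of $T$ together with the hypothesis that $\psi$ is an internal and external separation pins down the orientation of every ordered triple $(x_a, x_b, x_c)$ of points of $\mathcal{X}$. Once that is established, the second part of the statement follows almost immediately: a point set produced by Construction~1 (respectively Construction~2) admits a labeling — the one implicitly built by the recursion in the construction — and, as we will check, that labeling is also an internal and external separation; hence by the first part it has the same, uniquely determined, order type. So the real content is: \emph{any two point sets admitting an internal/external separation labeling onto the same tree $T$ are order-equivalent, via the bijection $\psi_2^{-1}\circ\psi_1$.}

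\textbf{Key steps, in order.}
First I would set up the induction skeleton. Order the nodes of $T$ by levels; equivalently, induct on the structure of $T$ from the root down, or on $|\mathcal{X}_u|$ from the leaves up — I expect a leaves-up (equivalently, subtree-size) induction to be cleanest, because the separation conditions are phrased subtree-by-subtree. The inductive claim is: for every node $u$, the order type of $\overline{\mathcal{X}_u}$ is determined by the subtree of $T$ rooted at $u$. The base case is a single node (or, for Type~2, the three children of the root together with the root, i.e.\ a point inside a triangle, whose order type is forced).

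Second, the inductive step. Given $u$ with children $u(l)$ and $u(r)$, we already know (inductively) the order types of $\overline{\mathcal{X}_{u(l)}}$ and of $\overline{\mathcal{X}_{u(r)}}$. We must determine (i) all triples with one vertex $= x_u$ and the other two in $\mathcal{X}_{u(l)}\cup\mathcal{X}_{u(r)}$, and (ii) all "mixed" triples with vertices split between $\overline{\mathcal{X}_{u(l)}}$ and $\overline{\mathcal{X}_{u(r)}}$. For (ii) with two vertices on one side and one on the other: the external separation at $u$ says every point of $\overline{\mathcal{X}_{u(r)}}$ lies to the left of every $(\overline{\mathcal{X}_{u(l)}}, \{x_u\})$-line; combined with the (inductively known) position of the third vertex inside $\overline{\mathcal{X}_{u(l)}}$, this fixes the orientation. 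For triples with one vertex on each side plus $x_u$, and for triples with one vertex on each side plus a point in $\mathcal{X}/\overline{\mathcal{X}_u}$, the internal separation at $u$ does the job: it asserts that $\mathcal{X}/\mathcal{X}_u$ is entirely on one side of every $(\overline{\mathcal{X}_{u(l)}}, \overline{\mathcal{X}_{u(r)}})$-line, which (since any pair with one point in each subtree spans such a line) is exactly the sign of the corresponding $3\times 3$ determinant. Finally, triples lying wholly in $\mathcal{X}/\overline{\mathcal{X}_u}$ are handled at the least common ancestor of the three points, so they are covered by applying the argument at that node; thus iterating over all nodes of $T$ exhausts all triples of $\mathcal{X}$. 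This shows the order type of $\mathcal{X}$ is a function of $T$ alone.

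\textbf{Tying in Constructions 1 and 2.}
To finish, I would observe that the recursive definition of Construction~1 (resp.\ Construction~2) yields a canonical labeling $\psi$ onto $T_1(k)$ (resp.\ $T_2(k)$): each point $z_i$ of layer $R_j$ is the parent of the two points $u_i', u_i''$ it spawns in $R_{j+1}$. One then checks that when the parameters $\varepsilon_j$ and $\varepsilon_j/\delta_j$ are taken small enough, this $\psi$ is nested, is an internal separation, and is an external separation — the points $u_i', u_i''$ are squeezed arbitrarily close to $z_i$, so the subtree $\overline{\mathcal{X}_u}$ sits in an arbitrarily small neighborhood of $x_u$, which forces both separation conditions. (Alternatively, and more economically, invoke the already-proven implication $\eqref{thm:characterization-def}\Rightarrow\eqref{thm:characterization-order}$, since Construction~1 and~2 manifestly produce nested almost convex sets.) Either way, any such construction-output has a labeling that is an internal and external separation onto the appropriate $T$, so by the first part it shares the unique order type determined by $T$ — which is the order type of $\mathcal{X}$.

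\textbf{Main obstacle.}
The delicate point is the bookkeeping in the inductive step: enumerating \emph{all} the cases for a triple $(x_a,x_b,x_c)$ according to where $a,b,c$ sit relative to the subtrees at their least common ancestors, and verifying that in each case either an internal-separation statement or an external-separation statement (possibly at a node other than the immediate common ancestor, applied to the two "lower" points and one "higher" point) furnishes the orientation without circularity. Getting the quantifiers right — "to the left of \emph{every} $(A,B)$-line" encodes precisely a sign condition on determinants for \emph{all} pairs from $A\times B$ — is where care is needed, and one has to make sure the induction is genuinely well-founded (each orientation is resolved using only orientations of strictly smaller configurations, or strictly "shallower" structural data).
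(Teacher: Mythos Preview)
Your overall strategy---show that the internal/external separation hypotheses determine the orientation of every triple, and then observe that Constructions~1 and~2 produce sets which (via $1\Rightarrow 3$) admit such labelings---is exactly the paper's. The paper also sets up the bijection $f=\psi_Y^{-1}\circ\psi$ and argues it preserves all orientations.

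There is, however, a genuine error in your inductive step. For a triple with two points $a,b\in\overline{\mathcal{X}_{u(l)}}$ and one point $c\in\overline{\mathcal{X}_{u(r)}}$, you invoke external separation \emph{at $u$}. But external separation at $u$ only constrains points of $\mathcal{X}\setminus\overline{\mathcal{X}_u}$; since $c\in\overline{\mathcal{X}_{u(r)}}\subset\overline{\mathcal{X}_u}$, the condition says nothing about $c$ relative to any line through $a,b$. The same problem arises for triples with $x_u$ and two points in the \emph{same} subtree (your case~(i) when both lie in $\mathcal{X}_{u(l)}$, say), which you do not address. In both situations the information you need lives at a \emph{deeper} node, and your inductive hypothesis ``the order type of $\overline{\mathcal{X}_{u(l)}}$ is determined'' does not carry it, because that hypothesis speaks only about triples entirely inside $\overline{\mathcal{X}_{u(l)}}$.

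The fix---which you yourself gesture at in your ``Main obstacle'' paragraph---is to abandon the subtree-by-subtree induction and argue directly, as the paper does: given any triple $u_1,u_2,u_3$, let $w$ be the \emph{deepest} node of $T$ whose subtree $T_w$ contains at least two of them. By maximality of depth, the two that lie in $T_w$ cannot both lie in $T_{w(l)}$ nor both in $T_{w(r)}$. One then checks the remaining cases: if one of $u_1,u_2,u_3$ lies in $T_{w(l)}$ and one in $T_{w(r)}$, the third is in $\mathcal{X}\setminus\mathcal{X}_w$ and internal separation at $w$ gives the orientation; if instead one of the two is $w$ itself and the other is in (say) $T_{w(l)}$, the third lies in $\mathcal{X}\setminus\overline{\mathcal{X}_w}$ and external separation at $w$ gives the orientation. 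This single case split replaces your inductive bookkeeping and avoids the gap.
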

\begin{proof}
The labeling that $\mathcal{X}$ admits can be a type 1 labeling or a type 2 labeling.
If $\mathcal{X}$ admits a type 1 labeling, $|\mathcal{X}|=2^{k+1}-2$ for some integer $k$;
in this case, an almost convex set with the same cardinality than $\mathcal{X}$
can be obtained using Construction 1. 
If $\mathcal{X}$ admits a type 2 labeling, $|\mathcal{X}|=3\cdot 2^{k-1}-2$ for some integer $k$;
in this case, an almost convex set with the same cardinality than $\mathcal{X}$
can be obtained using Construction 2. 
Let $\mathcal{Y}$ be an almost convex set with $|\mathcal{X}|$ points 
obtained from Construction 1 or Construction 2.
We prove that $\mathcal{X}$ and $\mathcal{Y}$ have the same order type,
and that this order type is determined by $T$.

Assume that $\mathcal{X}$ admits a type 1 labeling. 
The case when $\mathcal{X}$ admits a type~2 labeling can be proven in a similar way.
As $\mathcal{Y}$ is an almost convex set, $\mathcal{Y}$ admits a labeling that is an internal separation and an external separation.
Let $\psi_Y:\mathcal{Y}\rightarrow T$ be such type 1 labeling.

Let $f:=\psi^{-1}_Y(\psi')$.
We prove that $f:\mathcal{X}\rightarrow \mathcal{Y}$ is a bijection that preserves 
the orientation of all triplets. 
Let $x_1,x_2,x_3$ be different points in $\mathcal{X}$,
let $u_1,u_2,u_3$ be the nodes of $T$ that $x_1,x_2,x_3$ label in $\psi'$,
and let $y_1,y_2,y_3$ be the labels of $u_1,u_2,u_3$  in $\psi_Y$.
Note that $f(x_1)=y_1$, $f(x_2)=y_2$ and $f(x_3)=y_3$.
To prove that $(x_1,x_2,x_3)$ and $(y_1,y_2,y_3)$ have the same orientation,
we show that the position of $u_1$, $u_2$ and $u_3$ in $T$
determines the orientation of any labeling of $u_1$, $u_2$ and $u_3$.
 
 Given a node $w$ of $T$, denote by $T_w$ 
 the subtree of $T$ that contains every descendant of $w$.
 Let $w$ be the farthest node from the root of $T$, such that
 at least two of $u_1,u_2,u_3$ are in $T_w$.
 If two of $u_1,u_2,u_3$ are in the left subtree of $T_w$ or, 
 two of $u_1,u_2,u_3$ are in the right subtree of $T_w$;
 the orientation of the labels of $u_1,u_2,u_3$
 is determined by an external separation.
 If there are not two of $u_1,u_2,u_3$ in the left subtree of $T_w$ or in the right subtree of $T_w$;
 there is one of $u_1,u_2,u_3$ in the left subtree of $T_w$,
 one $u_1,u_2,u_3$ in the right subtree of $T_w$,
 and the other one is not in the left or right subtree of $T_w$. 
 In this case, the orientation of the labels of $u_1,u_2,u_3$
 is determined by an internal separation. 
 
\end{proof}

\section{Drawings of Nested Almost Convex Sets with Small Size.} \label{section:drawing}

Let $\mathcal{X}'$ be a nested almost convex set with $n$ points, 
and let $k$ be the number of convex layers of $\mathcal{X}'$.
In this section we construct a drawing of $\mathcal{X}'$ of size $O(n^{\log_2 5})$.
This section is divided into three parts.
First, we construct a $2^{k+1}-2$ point set $\mathcal{X}$ with integer coordinates and size $2\cdot5^{k+1}$.
Afterwards, we prove that $\mathcal{X}$ is a nested almost convex set.
Finally, we obtain a subset of $\mathcal{X}$ that is a drawing of $\mathcal{X}'$.

\subsection*{Construction of $\mathcal{X}$.}

Recall that  $T_1(k)$ is the complete binary tree  with $2^{k+1}-1$ nodes,
and the $j$-level of $T_1(k)$ is the set of nodes at distance $j$ from the root of $T_1(k)$.
Before defining $\mathcal{X}$, we will construct a point set $\mathcal{Y}$ in convex position,
and for each node $u$ in $T_1(k)$, 
we will define a set $\mathcal{Y}_u\subset \mathcal{Y}$ of consecutive points of $\mathcal{Y}$ in counterclockwise order.
The point $x_u$ will denote the midpoint between the first and last points of $\mathcal{Y}_u$ in counterclockwise order.
The set $\mathcal{X}$ will be the set of points $x_u$ such that $u$ is a node of $T_1(k)$ different from the root.

Let $p$, $o$ and $q$ be points in the plane and let $c\in[0,1]$.
We denote by $\overline{op}$ and $\overline{oq}$ the segments from $o$ to $p$ and from $o$ to $q$, respectively.
We say that $\alpha=(q,o,p)$ is a $\emph{corner}$,
if the angle from $\overline{op}$ to $\overline{oq}$ counterclockwise
is less than $\pi$.
Let $\alpha:=(q,o,p)$ be a corner.
We denote by $\Lp(\alpha,c)$ the point in the segment $\overline{oq}$ 
at distance $c|\overline{oq}|$ from $o$.
We denote by $\Rp(\alpha,c)$ the point in the segment $\overline{op}$ 
at distance $c|\overline{op}|$ from $o$. See Figure \ref{fig:corner}.

Recursively, we define a corner $\alpha_u$ for each node $u$ of $T_1(k)$.
The corner of the root of $T_1(k)$ is defined as $((0,2\cdot5^{k+1}),(0,0),(2\cdot5^{k+1},0))$.
Let $u$ be a node for which its corner $\alpha_u$ has been defined;
the corners of its left and right children, $u(l)$ and $u(r)$, 
are defined as follows (See Figure \ref{fig:corner}): 
\begin{align*} \label{eq:Corners}
\alpha_{u(l)} & = (\Lp(\alpha_u,2/5), \Lp(\alpha_u,1/5), \Rp(\alpha_u,1/5)) \\
\alpha_{u(r)} & =\left(\Lp(\alpha_u,1/5), \Rp(\alpha_u,1/5),  \Rp(\alpha_u,2/5)\right)
\end{align*}
 \begin{figure}[ht]
	\centering
	\includegraphics[scale=.21]{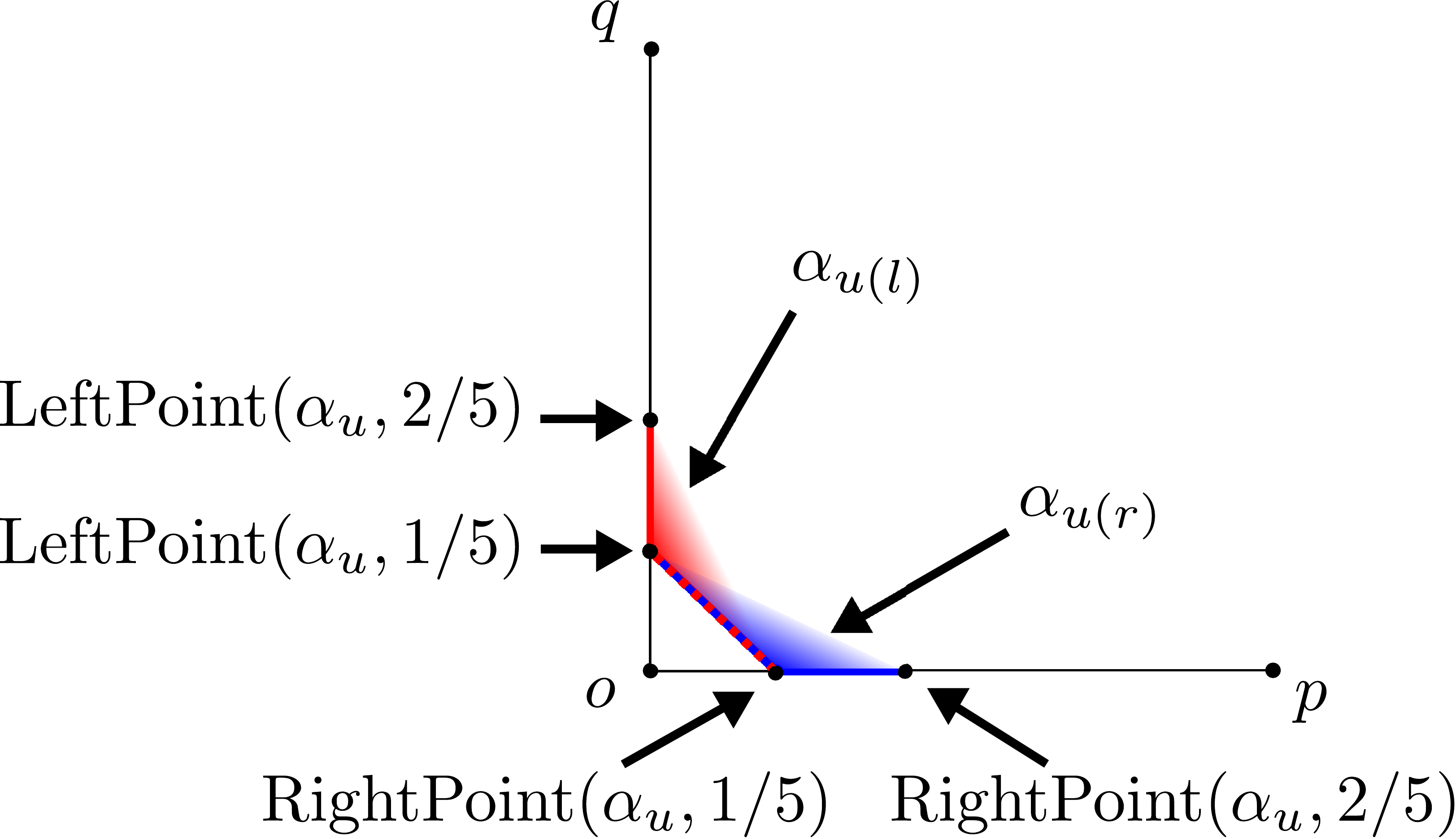}
	\caption{Illustration of corners $\alpha_u$, $\alpha_{u(l)}$ and $\alpha_{u(r)}$,
	where $\alpha_u=(q,o,p)$.}
	\label{fig:corner}
\end{figure}

Let $v$ be a leaf of $T_1(k+1)$.
Note that $v$ is a child of a leaf $u$ of $T_1(k)$.
If $v$ is the left child of $u$,
let $y_{v}:=\Lp(\alpha_u,1/5)$.
If $v$ is the right child of  $u$,
let $y_{v}:=\Rp(\alpha_u,1/5)$.
We define $\mathcal{Y}$ as the set of points $y_{v}$
such that $v$ is a leaf of $T_1(k+1)$.
Given a node $u$ of $T_1(k)$, 
we define $\mathcal{Y}_u$ as the set of points $y_{v}$
such that $v$ is a descendant of $u$, and $v$ is a leaf of $T_1(k+1)$.
With respect to the counterclockwise order, we denote by:
 $\first[\mathcal{Y}_u]$, the first point in $\mathcal{Y}_u$; 
 $\last[\mathcal{Y}_u]$, the last point in $\mathcal{Y}_u$;
 $\previous[\mathcal{Y}_u]$, the point in $\mathcal{Y}_u$ previous to $\first[\mathcal{Y}_u]$; and
 $\next[\mathcal{Y}_u]$, the point in $\mathcal{Y}_u$ next to $\last[\mathcal{Y}_u]$.

\begin{lem} \label{lem:Construction-properties1}
 Let $u$ be a node of $T_1(k)$.
 Let $v_1,v_2,\dots ,v_t$ be the leaves of $T_1(k+1)$, that are descendant of $u$,
 ordered from left to right.
 Then $y_{v_1}, y_{v_2},\dots,y_{v_t}$ are in convex position, 
 and are the points in $\mathcal{Y}_u$ in counterclockwise order.
\end{lem}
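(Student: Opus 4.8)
The plan is to prove Lemma~\ref{lem:Construction-properties1} by induction on the height of the subtree rooted at $u$ within $T_1(k)$, equivalently by downward induction on the level $j$ of $u$, starting from the leaves of $T_1(k)$ and working toward the root. The two assertions to maintain are: (i) the points $y_{v_1},\dots,y_{v_t}$ associated to the leaves of $T_1(k+1)$ descending from $u$, read left to right, are the points of $\mathcal{Y}_u$ in counterclockwise order; and (ii) these points are in convex position. It will be convenient to strengthen the inductive hypothesis so that it also records the position of $\mathcal{Y}_u$ relative to the corner $\alpha_u=(q,o,p)$: namely that all points of $\mathcal{Y}_u$ lie in the triangle $\triangle opq$, with $\first[\mathcal{Y}_u]$ on the open segment $\overline{op}$ (between $\Rp(\alpha_u,1/5)$ and $o$, say) and $\last[\mathcal{Y}_u]$ on $\overline{oq}$, and that the whole chain, together with the corner vertex $o$, bounds a convex region on the side of $o$. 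This extra geometric bookkeeping is what makes the induction close.

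For the base case, $u$ is a leaf of $T_1(k)$ and has exactly two descendant leaves in $T_1(k+1)$: its left child $u(l)$, with $y_{u(l)}=\Lp(\alpha_u,1/5)\in\overline{oq}$, and its right child $u(r)$, with $y_{u(r)}=\Rp(\alpha_u,1/5)\in\overline{op}$. Two points are trivially in convex position, and checking that the counterclockwise order of this pair agrees with the left-to-right order of $(u(l),u(r))$ amounts to unwinding the definition of a corner — the angle from $\overline{op}$ to $\overline{oq}$ counterclockwise is less than $\pi$, so as one traverses $\partial(\triangle opq)$ counterclockwise the segment $\overline{op}$ is met before $\overline{oq}$. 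The inductive step is the heart of the argument: assuming the claim for the two children $u(l)$ and $u(r)$ of an internal node $u$ of $T_1(k)$, we must glue the two convex chains $\mathcal{Y}_{u(l)}$ and $\mathcal{Y}_{u(r)}$ into one. The leaves of $T_1(k+1)$ below $u$ are exactly those below $u(l)$ followed (in left-to-right order) by those below $u(r)$, so $\mathcal{Y}_u=\mathcal{Y}_{u(l)}\cup\mathcal{Y}_{u(r)}$ as an ordered concatenation, and we need the concatenation of the two counterclockwise convex chains to again be a counterclockwise convex chain. By the strengthened hypothesis, $\mathcal{Y}_{u(l)}$ sits inside $\triangle$ of $\alpha_{u(l)}=(\Lp(\alpha_u,2/5),\Lp(\alpha_u,1/5),\Rp(\alpha_u,1/5))$ and $\mathcal{Y}_{u(r)}$ inside $\triangle$ of $\alpha_{u(r)}=(\Lp(\alpha_u,1/5),\Rp(\alpha_u,1/5),\Rp(\alpha_u,2/5))$; I would verify that these two triangles are disjoint except possibly at the common point $\Lp(\alpha_u,1/5)=\Rp(\alpha_{u(r)}$-vertex$)$ — actually the shared point is $\Lp(\alpha_u,1/5)$, which is the corner vertex of $\alpha_{u(l)}$'s ``$o$'' and an endpoint for $\alpha_{u(r)}$ — and, crucially, that they lie on the correct sides of the line through the two ``bridge'' endpoints $\last[\mathcal{Y}_{u(l)}]$ and $\first[\mathcal{Y}_{u(r)}]$, so that the bridge turn and the two junction turns at these endpoints are all left turns. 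These are finitely many explicit coordinate inequalities coming from the fixed ratios $1/5,\,2/5$, scaled by $2\cdot5^{k+1}$; once one checks them for the root corner they propagate by affine invariance.

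The main obstacle, then, is the gluing step — specifically, showing that the turn is convex at the junction point $\last[\mathcal{Y}_{u(l)}]$, at $\first[\mathcal{Y}_{u(r)}]$, and across the bridge, and that no point of $\mathcal{Y}_{u(l)}$ lies on the wrong side of the supporting lines of $\mathcal{Y}_{u(r)}$ and vice versa. The clean way to handle this is to choose the strengthened hypothesis so that $\mathcal{Y}_{u(l)}$ is contained in the triangle $\Lp(\alpha_u,1/5)\,\Lp(\alpha_u,2/5)\,\Rp(\alpha_u,1/5)$ and $\mathcal{Y}_{u(r)}$ in $\Rp(\alpha_u,1/5)\,\Lp(\alpha_u,1/5)\,\Rp(\alpha_u,2/5)$, and to note that these two triangles meet only along the segment $\overline{\Lp(\alpha_u,1/5)\,\Rp(\alpha_u,1/5)}$ and that this common segment is ``hidden inside'' relative to the outer chain; hence the convex hull of $\mathcal{Y}_{u(l)}\cup\mathcal{Y}_{u(r)}$ is bounded by the outer portions of the two chains, which is precisely the concatenated chain. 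Convexity at each vertex of $\mathcal{Y}_u$ that was already a vertex of $\mathcal{Y}_{u(l)}$ or $\mathcal{Y}_{u(r)}$ is inherited; convexity at the two new junction vertices and along the bridge reduces to sign conditions on a handful of determinants in the ratios $1/5$ and $2/5$, which I would verify once and for all by a short direct computation (or by placing the root corner at a convenient fixed location and reading off the signs). Finally, the counterclockwise-order statement follows because traversing $\partial(\triangle opq)$ and picking out the points of $\mathcal{Y}_u$ in order visits $\mathcal{Y}_{u(l)}$ before $\mathcal{Y}_{u(r)}$, matching left-to-right order of leaves, by the same angle-$<\pi$ observation used in the base case.
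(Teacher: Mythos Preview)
Your inductive scheme --- from the leaves of $T_1(k)$ toward the root, with a strengthened hypothesis recording containment of $\mathcal{Y}_u$ in a corner-determined triangle and convexity of the chain augmented by auxiliary points --- is exactly the paper's. The gap is in the choice of auxiliary point. The polygon $o,\,y_{v_1},\dots,y_{v_t}$ is \emph{not} convex in general, because the chain $\mathcal{Y}_u$ bulges toward $o$, not away from it. Already for $u$ at distance $k-1$ from the root (so $|\mathcal{Y}_u|=4$), placing $\alpha_u$ at $q=(0,100)$, $o=(0,50)$, $p=(50,0)$ gives $\mathcal{Y}_u=\{(0,62),(2,56),(8,44),(12,38)\}$, and the turn $(8,44)\to(12,38)\to o$ is a right turn. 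So this strengthening cannot be carried inductively, and the ``handful of determinants'' you defer are not a finite check either: the junction points $\last[\mathcal{Y}_{u(l)}]$, $\first[\mathcal{Y}_{u(r)}]$ sit at parameter $c_{j+1}$ along their segments, which varies with the level. (There is also an orientation slip: $\first[\mathcal{Y}_u]=\Lp(\alpha_u,c_j)$ lies on $\overline{oq}$ and $\last[\mathcal{Y}_u]$ on $\overline{op}$, the reverse of what you wrote; your base-case reasoning about which segment is met first correspondingly yields the order opposite to left-to-right.)

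The paper's fix is to append not $o$ but the two points $q':=\Lp(\alpha_u,2/5)$ and $p':=\Rp(\alpha_u,2/5)$, maintaining that $q',y_{v_1},\dots,y_{v_t},p'$ are in convex position counterclockwise and that $\mathcal{Y}_u$ lies in the smaller triangle $\triangle(q',o,p')$. These points lie beyond the chain's endpoints along $\overline{oq}$ and $\overline{op}$, so they extend the chain outward. At the gluing step the left child's trailing auxiliary point $p'_{u(l)}=\Rp(\alpha_{u(l)},2/5)$, the right child's leading auxiliary point $q'_{u(r)}=\Lp(\alpha_{u(r)},2/5)$, and the two junction endpoints $\last[\mathcal{Y}_{u(l)}]$, $\first[\mathcal{Y}_{u(r)}]$ are all collinear on the shared segment $\overline{\Lp(\alpha_u,1/5)\,\Rp(\alpha_u,1/5)}$; hence the two junction turns are controlled directly by the children's extended-chain convexity, with no residual computation depending on the depth.
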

\begin{proof}
  Let $(q,o,p):=\alpha_u$; $q':=\Lp(\alpha_u,2/5)$; and $p':=\Rp(\alpha_u,2/5)$. 
  Let $\triangle(u)$ be the triangle determined by 
  $q'$, $o$ and $p'$.
  inductively from the leaves to the root of $T_1(k)$,
  it can be proven that:
  \begin{enumerate}
   \item The set of points of $\mathcal{Y}$ 
    in $\triangle(u)$ is $\mathcal{Y}_u$;
    from which:
    $\first[\mathcal{Y}_u]$ is on the segment from $o$ to $q'$,
    $\last[\mathcal{Y}_u]$ is on the segment from $o$ to $p'$,
    and the other points are in the interior of $\triangle(u)$.
   \item The points $q',y_{v_1}, y_{v_2},\dots,y_{v_t},p'$ are in convex position, 
    and appear in this order counterclockwise.
  \end{enumerate}
  This proof follows from 2.
\end{proof}

By Lemma~\ref{lem:Construction-properties1},
$\mathcal{Y}$ is in convex position,
and for each node $u$ in $T_1(k)$, 
$\mathcal{Y}_u$ is a subset of consecutive points of $\mathcal{Y}$ in counterclockwise order.
We denote by $x_u$ the midpoint between $\first[\mathcal{Y}_u]$ and $\last[\mathcal{Y}_u]$.
Let $\mathcal{X}$ be the set of points $x_u$ such that $u$ is a node of $T_1(k)$ different from the root.

Let $u$ be a node of $T_1(k)$ at distance $j$ from the root, 
let $(q,o,p):=\alpha_u$ and let $v$ be a leaf of $T_1(k+1)$.
Recursively  note that, the coordinates of $q$, $o$ and $p$ are divisible by $2\cdot5^{k+1-j}$.
Thus, the coordinates of $y_v$ are divisible by $2$,
$x_u$ has integer coordinates,
and $\mathcal{X}$ has size $2\cdot5^{k+1}$.

\subsection*{$\mathcal{X}$ is a nested Almost Convex Set.}
In this subsection we prove that $\mathcal{X}$ is a nested almost convex set. 
By Theorem~\ref{thm:characterization}, 
it is enough to prove that $\mathcal{X}$ admits a labeling that is an internal separation and an external separation.
Let $\psi:\mathcal{X}\rightarrow T_1(k)$ be the type 1 labeling that 
labels each node $u$ of $T_1(k)$ different from the root, with $x_u$.
We prove that $\psi$
is both an internal separation 
and an  external separation.

\begin{lem} \label{lem:Construction-propertiesC}
 If $u$ is a node of $T_1(k)$ at distance $j$ from the root,
 then $\first[\mathcal{Y}_{u}]=\Lp(\alpha_u,c_j)$ and 
 $\last[\mathcal{Y}_{u}]=\Rp(\alpha_u,c_j)$,
 where $$c_j=\frac{1}{4} \left( 1-5^{(j-k-1)}\right).$$
\end{lem}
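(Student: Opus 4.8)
The plan is to prove Lemma~\ref{lem:Construction-propertiesC} by induction on $k-j$, that is, working from the leaves of $T_1(k)$ up to its root, and to carry the induction along the geometric description of $\triangle(u)$ and $\mathcal{Y}_u$ already established in Lemma~\ref{lem:Construction-properties1}. The base case is $j=k$: if $u$ is a leaf of $T_1(k)$, then its children in $T_1(k+1)$ are leaves $v$ with $y_v=\Lp(\alpha_u,1/5)$ and $y_v=\Rp(\alpha_u,1/5)$, so $\mathcal{Y}_u$ consists of exactly these two points, whence $\first[\mathcal{Y}_u]=\Lp(\alpha_u,1/5)$ and $\last[\mathcal{Y}_u]=\Rp(\alpha_u,1/5)$; and indeed $c_k=\tfrac14(1-5^{-1})=\tfrac14\cdot\tfrac45=\tfrac15$, so the formula holds.

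For the inductive step, suppose $u$ is at distance $j<k$ and the claim holds for all nodes at distance $j+1$, in particular for $u(l)$ and $u(r)$. By Lemma~\ref{lem:Construction-properties1}, the points of $\mathcal{Y}_u$ in counterclockwise order are $y_{v_1},\dots,y_{v_t}$, the leaves of $T_1(k+1)$ descended from $u$ ordered left to right; these split as the leaves descended from $u(l)$ followed by those descended from $u(r)$. Hence $\first[\mathcal{Y}_u]=\first[\mathcal{Y}_{u(l)}]$ and $\last[\mathcal{Y}_u]=\last[\mathcal{Y}_{u(r)}]$. Applying the inductive hypothesis, $\first[\mathcal{Y}_u]=\Lp(\alpha_{u(l)},c_{j+1})$ and $\last[\mathcal{Y}_u]=\Rp(\alpha_{u(r)},c_{j+1})$. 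Now I would substitute the definitions $\alpha_{u(l)}=(\Lp(\alpha_u,2/5),\Lp(\alpha_u,1/5),\Rp(\alpha_u,1/5))$ and $\alpha_{u(r)}=(\Lp(\alpha_u,1/5),\Rp(\alpha_u,1/5),\Rp(\alpha_u,2/5))$ and express the resulting point back in terms of $\alpha_u$. The left corner $\alpha_{u(l)}$ has apex $o':=\Lp(\alpha_u,1/5)$ and left ray towards $q':=\Lp(\alpha_u,2/5)$; moving a fraction $c_{j+1}$ of the way from $o'$ to $q'$ lands, since $o',q'$ both lie on the segment $\overline{oq}$ of $\alpha_u$ at parameters $1/5$ and $2/5$, at parameter $1/5+c_{j+1}\cdot(2/5-1/5)=1/5+c_{j+1}/5$ along $\overline{oq}$; so $\first[\mathcal{Y}_u]=\Lp(\alpha_u,\,\tfrac15+\tfrac{c_{j+1}}5)$. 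A symmetric computation on the right gives $\last[\mathcal{Y}_u]=\Rp(\alpha_u,\,\tfrac15+\tfrac{c_{j+1}}5)$. It then remains to check the arithmetic identity $c_j=\tfrac15+\tfrac{c_{j+1}}5$, i.e. $\tfrac14(1-5^{j-k-1})=\tfrac15+\tfrac15\cdot\tfrac14(1-5^{j-k})$, which is a one-line verification: the right side equals $\tfrac15+\tfrac1{20}-\tfrac1{20}\cdot5^{j-k}=\tfrac14-\tfrac14\cdot5^{j-k-1}=c_j$.

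The one point that needs genuine care, rather than mechanical substitution, is the collinearity bookkeeping: I must be sure that the left ray of $\alpha_{u(l)}$ and the right ray of $\alpha_{u(r)}$ actually lie along the rays of $\alpha_u$, so that composing "fraction $c_{j+1}$ along a sub-segment" with "the sub-segment sits at parameters $1/5$ to $2/5$" correctly yields an affine reparametrization with the simple formula above. This is exactly what the geometric picture in Figure~\ref{fig:corner} and the structure recorded in the proof of Lemma~\ref{lem:Construction-properties1} guarantee: $\Lp(\alpha_u,1/5)$, $\Lp(\alpha_u,2/5)$ are both on $\overline{oq}$, and $\Rp(\alpha_u,1/5)$, $\Rp(\alpha_u,2/5)$ are both on $\overline{op}$, so the segments defining the children's outer rays are honest sub-segments of $\overline{oq}$ and $\overline{op}$. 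Once this is spelled out, the rest is the routine computation sketched above, and the lemma follows by induction.
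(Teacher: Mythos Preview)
Your proof is correct and follows essentially the same approach as the paper: induction from the leaves ($j=k$) to the root, with base case $c_k=1/5$ and inductive step via the recursion $c_j=\tfrac15+\tfrac15 c_{j+1}$ obtained from the fact that the outer ray of $\alpha_{u(l)}$ (resp.\ $\alpha_{u(r)}$) is the sub-segment of $\overline{oq}$ (resp.\ $\overline{op}$) between parameters $1/5$ and $2/5$. Your write-up is in fact more careful than the paper's in spelling out the identification $\first[\mathcal{Y}_u]=\first[\mathcal{Y}_{u(l)}]$ and the collinearity needed for the reparametrization.
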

\begin{proof}
 Note that \[c_j=\sum_{i=k}^j\left( \frac{1}{5}\right)^{k+1-j}.\]
 If $j=k$, then: $u$ is a leaf of $T_1(k)$; $c_j=1/5$;
 and $\first[\mathcal{Y}_{u}]=\Lp(\alpha_u,c_j)$ and 
 $\last[\mathcal{Y}_{u}]=\Rp(\alpha_u,c_j)$.
 Suppose that $j<k$, and that this lemma holds for larger values of $j$.
 Let $u(l)$ and $u(r)$ be the left and right children of $u$.
 Note that by induction, \[first[\mathcal{Y}_{u}]=\Lp(\alpha_{u(l)},c_{j+1})=\Lp(\alpha_{u},c*)\]
 where $c*=(1/5)c_{j+1}+1/5=c_j$; thus $\first[\mathcal{Y}_{u}]:=\Lp(\alpha_u,c_j)$.
 In a similar way $\last[\mathcal{Y}_{u}]:=\Rp(\alpha_u,c_j)$.
\end{proof}

 \begin{lem}
   $\psi$ is an internal separation.\label{lem:Construction:internal-separation}
  \end{lem}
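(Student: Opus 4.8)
The plan is to fix a node $u$ of $T_1(k)$ and show that every point of $\mathcal{X}/\mathcal{X}_u$ lies to the left of every $(\overline{\mathcal{X}_{u(l)}},\overline{\mathcal{X}_{u(r)}})$-line. The key geometric fact I would exploit is that, by Lemma~\ref{lem:Construction-properties1}, the whole point set $\mathcal{Y}$ is in convex position and each $\mathcal{Y}_v$ is a block of consecutive points; moreover, by Lemma~\ref{lem:Construction-propertiesC}, $\first[\mathcal{Y}_v]$ and $\last[\mathcal{Y}_v]$ are explicitly $\Lp(\alpha_v,c_j)$ and $\Rp(\alpha_v,c_j)$, and each $x_v$ is the midpoint of these two. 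First I would record the containment statements that make the corner construction rigid: for the children $u(l),u(r)$ of $u$, the segment $\overline{\first[\mathcal{Y}_{u(l)}]\,\last[\mathcal{Y}_{u(l)}]}$ together with the convex chain $\mathcal{Y}_{u(l)}$ bounds a convex region containing $\overline{\mathcal{X}_{u(l)}}$, and symmetrically for $u(r)$; this is the analogue of the $P_1,P_2$ and $q$ argument in Lemma~\ref{lem:internal-separation}, but here it can be made concrete since all the relevant points are $\Lp/\Rp$-points on the two legs of $\alpha_u$.

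The core of the argument is to produce a single directed segment $\ell$ that separates $\overline{\mathcal{X}_{u(l)}}\cup\overline{\mathcal{X}_{u(r)}}$ from $\mathcal{X}/\mathcal{X}_u$ and is never crossed by any $(\overline{\mathcal{X}_{u(l)}},\overline{\mathcal{X}_{u(r)}})$-line. I would take $\ell$ to be the directed segment from $\first[\mathcal{Y}_{u(l)}]$ to $\last[\mathcal{Y}_{u(r)}]$ — equivalently, from $\Lp(\alpha_u,c_j)$ to $\Rp(\alpha_u,c_j)$ where $j$ is the level of $u$ — exactly mirroring the proof of Lemma~\ref{lem:internal-separation}, except that now I can verify everything by direct coordinate computation inside the corner $\alpha_u$. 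Concretely: (i) all points $x_v$ with $v$ a descendant of $u(l)$ or $u(r)$ lie in the triangle $q'op'$ with $q'=\Lp(\alpha_u,2/5)$, $p'=\Rp(\alpha_u,2/5)$, hence strictly on the far side of $\ell$ from $o$ — I'd check this by noting $c_j<1/5<2/5$ so $\mathcal{Y}_{u(l)},\mathcal{Y}_{u(r)}$ sit beyond $\ell$; (ii) every point of $\mathcal{X}/\mathcal{X}_u$ lies in the convex hull of $\{o\}\cup(\mathcal{Y}\setminus\mathcal{Y}_u)$, hence on the $o$-side of $\ell$, using Lemma~\ref{lem:Construction-properties1} applied to the ancestors of $u$; and (iii) $\ell$ has both endpoints on the boundary of $\mathrm{conv}(\mathcal{Y})$ (they are vertices of the convex polygon $\mathcal{Y}$), so once I know no $(\overline{\mathcal{X}_{u(l)}},\overline{\mathcal{X}_{u(r)}})$-line crosses $\ell$, the left-of statement for all of $\mathcal{X}/\mathcal{X}_u$ follows.

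For the non-crossing claim — the analogue of Claim~\ref{claim:internal-separation} — I would argue as follows. Any $(\overline{\mathcal{X}_{u(l)}},\overline{\mathcal{X}_{u(r)}})$-line $\ell'$ goes from a point in the convex region attached to the chain $\mathcal{Y}_{u(l)}$ to a point in the convex region attached to $\mathcal{Y}_{u(r)}$; since $\mathcal{Y}_{u(l)}$ entirely precedes $\mathcal{Y}_{u(r)}$ in counterclockwise order along the convex polygon $\mathcal{Y}$ and the endpoints of $\ell$ are $\first[\mathcal{Y}_{u(l)}]$ and $\last[\mathcal{Y}_{u(r)}]$, the slope of $\ell'$ lies between that of the line through $\first[\mathcal{Y}_{u(l)}],\first[\mathcal{Y}_{u(r)}]$ and that of the line through $\last[\mathcal{Y}_{u(l)}],\last[\mathcal{Y}_{u(r)}]$, and both of these sweep to the same side of $\ell$; hence $\ell'$ cannot cross the chord $\ell$. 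I expect the main obstacle to be bookkeeping rather than ideas: one must be careful that $\overline{\mathcal{X}_{u(l)}}$ really is confined to the convex hull of $P_1\cup\{q\}$ (where $q$ is the diagonal intersection of the quadrilateral $\first[\mathcal{Y}_{u(l)}],\last[\mathcal{Y}_{u(l)}],\first[\mathcal{Y}_{u(r)}],\last[\mathcal{Y}_{u(r)}]$), which needs an induction down the subtree using that each $x_w$ is the midpoint of $\first[\mathcal{Y}_w]$ and $\last[\mathcal{Y}_w]$ and that these shrink toward $o$ geometrically by the factor $1/5$ at each level, as quantified in Lemma~\ref{lem:Construction-propertiesC}. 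Once that confinement is in hand, the rest is the same combinatorial-slopes argument used in Lemma~\ref{lem:internal-separation}.
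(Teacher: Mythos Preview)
Your overall strategy matches the paper's exactly: take $\ell$ to be the directed segment from $\first[\mathcal{Y}_{u(l)}]$ to $\last[\mathcal{Y}_{u(r)}]$, show that $\mathcal{X}/\mathcal{X}_u$ lies on one side and $\overline{\mathcal{X}_{u(l)}}\cup\overline{\mathcal{X}_{u(r)}}$ on the other, and then prove the non-crossing Claim. Two points, however, deserve correction.

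First, a computational slip: your inequality $c_j<1/5$ is false. From Lemma~\ref{lem:Construction-propertiesC} one has $c_j=\tfrac{1}{4}(1-5^{j-k-1})$, so $c_k=1/5$ and $c_j$ \emph{increases} toward $1/4$ as $j$ decreases. Consequently your step (i), arguing via the triangle $q'op'$ and ``$c_j<1/5<2/5$'', does not establish that the descendants lie on the far side of $\ell$ from $o$ (indeed $o$ is a vertex of that triangle, so containment there says nothing about the side of $\ell$).

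Second, you are working harder than necessary. The paper bypasses all corner-coordinate computations and the $P_1\cup\{q\}$/slope machinery by using one observation: every $x_v$ is the midpoint of two points of the convex set $\mathcal{Y}$. Hence (a) if $v\notin T_u$ then $\first[\mathcal{Y}_v],\last[\mathcal{Y}_v]$ are both not to the right of $\ell$, so $x_v$ is not to the right of $\ell$; (b) if $v$ is a proper descendant of $u$ then one of $\first[\mathcal{Y}_v],\last[\mathcal{Y}_v]$ is strictly right of $\ell$ and the other is not left of it, so $x_v$ is right of $\ell$; and (c) $\overline{\mathcal{X}_{u(l)}}\subset\mathrm{conv}(\mathcal{Y}_{u(l)})=\mathrm{conv}(P_1)$ and likewise for $u(r)$, which immediately forces every $(\overline{\mathcal{X}_{u(l)}},\overline{\mathcal{X}_{u(r)}})$-line to cross both $P_1$ and $P_2$. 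No induction down the subtree, no diagonal point $q$, and no slope comparison are needed.
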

  \begin{proof}
   Let $u$ be a node of $T_1(k)$ different from the root, 
   and let $u(l)$, $u(r)$ be the left and right children of $u$, respectively.
   We need to prove that every point in $\mathcal{X}/\mathcal{X}_u$ is to the left of every
   $(\overline{\mathcal{X}_{u(l)}},\overline{\mathcal{X}_{u(r)}})$-line.  
   
   Let $\ell$ be the directed segment from $\first[\mathcal{Y}_{u(l)}]$ to $\last[\mathcal{Y}_{u(r)}]$.
   As each point in $\mathcal{X}/\mathcal{X}_u$,
   is the midpoint between two points that are not to the right of $\ell$,
   every point in $\mathcal{X}/\mathcal{X}_u$ is not to the right of $\ell$.
   As every point in $\overline{\mathcal{X}_{u(l)}}\cup \overline{\mathcal{X}_{u(r)}}$,
   is the midpoint between a point to the right of $\ell$ and 
   a point that is not to the left of $\ell$, 
   every point in $\overline{\mathcal{X}_{u(l)}}\cup \overline{\mathcal{X}_{u(r)}}$ 
   is to the right of $\ell$. We claim that:
   
   \begin{claim} \label{claim:Construction:internal-separation}
    No $(\overline{\mathcal{X}_{u(l)}},\overline{\mathcal{X}_{u(r)}})$-line intersects $\ell$.
   \end{claim}
   
   As the endpoints of $\ell$, $\first[\mathcal{Y}_{u(l)}]$ and $\last[\mathcal{Y}_{u(r)}]$,
   are in the boundary of the convex hull of $\mathcal{Y}$;
   to prove that every point in $\mathcal{X}/\mathcal{X}_u$ is to the left of every
   $(\overline{\mathcal{X}_{u(l)}},\overline{\mathcal{X}_{u(r)}})$-line,
   it is enough to show Claim~\ref{claim:Construction:internal-separation}.
   
   Let $P_1$ be the polygonal chain that starts at $\first[\mathcal{Y}_{u(l)}]$,
   follows the points of $\mathcal{Y}_{u(l)}$ in counterclockwise order, and ends at $\last[\mathcal{Y}_{u(l)}]$.
   Similarly,  let $P_2$ be the polygonal chain that starts at $\first[\mathcal{Y}_{u(r)}]$,
   follows the points of $\mathcal{Y}_{u(r)}$ in counterclockwise order, and ends at $\last[\mathcal{Y}_{u(r)}]$.
   To prove Claim~\ref{claim:Construction:internal-separation} it is enough to show that
   every $(\overline{\mathcal{X}_{u(l)}},\overline{\mathcal{X}_{u(r)}})$-line intersects $P_1$ and $P_2$.
   This follows from the fact that $\overline{\mathcal{X}_{u(l)}}$ is contained in the convex hull of $P_1$,
   and $\overline{\mathcal{X}_{u(r)}}$ is contained in the convex hull of $P_2$.   
  \end{proof}

\begin{lem} \label{lem:Construction-properties2}
 Let $u$ be a node of $T_1(k)$ at distance $j$ from the root, and let $(q,o,p):=\alpha_u$.
 Suppose that the nodes in the $j$-level of $T_1(k)$,
 are ordered from left to right. 
 
 \begin{enumerate}
  \item \label{lem:Construction-properties2:collinear1}
    If $u$ is not the first node, then
    the points $o$, $\first[\mathcal{Y}_{u}]$, $\previous[\mathcal{Y}_{u}]$ and $q$ are collinear, 
    and $\previous[\mathcal{Y}_{u}]=\Lp(u,c)$ for some $c>3/5$.
  \item \label{lem:Construction-properties2:collinear2} 
    If $u$ is not the last node, then
    the points $o$, $\last[\mathcal{Y}_{u}]$, $\next[\mathcal{Y}_{u}]$ and $p$ are collinear,
    and $\next[\mathcal{Y}_{u}]=\Rp(u,c)$ for some $c>3/5$.

 \end{enumerate}
\end{lem}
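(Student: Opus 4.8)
The plan is to proceed by induction on the depth $j$, from the leaves of $T_1(k)$ up toward the root, in parallel with the recursive definition of the corners $\alpha_u$ and with Lemma~\ref{lem:Construction-propertiesC}, which already pins down $\first[\mathcal{Y}_u]=\Lp(\alpha_u,c_j)$ and $\last[\mathcal{Y}_u]=\Rp(\alpha_u,c_j)$. By symmetry it suffices to prove part~\ref{lem:Construction-properties2:collinear2}; part~\ref{lem:Construction-properties2:collinear1} follows by reflecting the roles of left and right. So fix a non-last node $u$ at depth $j$ with $\alpha_u=(q,o,p)$, and let $u'$ be the node immediately to its right in the $j$-level. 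The key observation is that $\next[\mathcal{Y}_u]=\first[\mathcal{Y}_{u'}]$: the leaves of $T_1(k+1)$ descending from $u$ and from $u'$ are consecutive blocks of $\mathcal{Y}$ in counterclockwise order (Lemma~\ref{lem:Construction-properties1}), so the point of $\mathcal{Y}$ right after $\last[\mathcal{Y}_u]$ is the first point of $\mathcal{Y}_{u'}$.

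First I would establish the base case $j=k$. Here $u$ and $u'$ are adjacent leaves of $T_1(k)$. There are two subcases according to whether $u$ is a left child or a right child of its parent $w$. If $u=w(l)$, then $u'=w(r)$, and one computes directly from the definitions that $\last[\mathcal{Y}_u]=\Rp(\alpha_u,1/5)$, while by the recursion $\alpha_{u(r)}$'s apex $\Rp(\alpha_u,2/5)$ and the point $\first[\mathcal{Y}_{u'}]=\Lp(\alpha_{u'},1/5)$ all lie on the segment $\overline{op}$; since these three points together with $o$ lie on a common ray out of $o$, they are collinear, and the fraction of $\first[\mathcal{Y}_{u'}]$ along $\overline{op}$ is checked to exceed $3/5$. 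The subcase $u=w(r)$ (so $u'$ is a left child of the node following $w$ at depth $k-1$) is handled by the same kind of direct computation, now using the inductive hypothesis at depth $k-1$ for the relative position of the parents.

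For the inductive step I would assume the statement at depth $j+1$ and deduce it at depth $j$. Let $u(r)$ be the right child of $u$; then $\last[\mathcal{Y}_u]=\last[\mathcal{Y}_{u(r)}]$ and $\next[\mathcal{Y}_u]=\next[\mathcal{Y}_{u(r)}]$, because $u(r)$ carries the rightmost block of leaves below $u$. Applying part~\ref{lem:Construction-properties2:collinear2} at depth $j+1$ to $u(r)$, whose corner is $\alpha_{u(r)}=(\Lp(\alpha_u,1/5),\Rp(\alpha_u,1/5),\Rp(\alpha_u,2/5))$, gives that $o' := \Rp(\alpha_u,1/5)$, $\last[\mathcal{Y}_{u(r)}]$, $\next[\mathcal{Y}_{u(r)}]$, and $\Rp(\alpha_u,2/5)$ are collinear, and that $\next[\mathcal{Y}_{u(r)}]=\Rp(\alpha_{u(r)},c)$ for some $c>3/5$. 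But $o'=\Rp(\alpha_u,1/5)$ and $\Rp(\alpha_u,2/5)$ both lie on the segment $\overline{op}$ of $\alpha_u$, so the line through them is exactly the line $\overline{op}$, which also contains $o$; hence $o$, $\last[\mathcal{Y}_u]$, $\next[\mathcal{Y}_u]$, $p$ are collinear. It remains to convert the ``$c>3/5$ along $\overline{o'\,\Rp(\alpha_u,2/5)}$'' statement into ``$c>3/5$ along $\overline{op}$'': since $o'=\Rp(\alpha_u,1/5)$ is already at fraction $1/5$ of $\overline{op}$, and $\Rp(\alpha_u,2/5)$ at fraction $2/5$, a point at fraction $>3/5$ of the way from $o'$ to $\Rp(\alpha_u,2/5)$ sits at fraction $>1/5 + (3/5)(1/5) = 8/25$ of $\overline{op}$ — which is not yet $>3/5$. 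So the naive one-level bound is not enough, and the honest estimate must use Lemma~\ref{lem:Construction-propertiesC}: $\next[\mathcal{Y}_u]=\first[\mathcal{Y}_{u'}]=\Lp(\alpha_{u'},c_j)$ where $u'$ is $u$'s right neighbor, and one unwinds the corner recursion relating $\alpha_{u'}$ to $\alpha_u$ together with $c_j=\tfrac14(1-5^{j-k-1})$ to see the fraction along $\overline{op}$ is $1 - \tfrac14(1-5^{j-k-1})$, which exceeds $3/4 > 3/5$.

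The main obstacle, then, is precisely this bookkeeping of fractions: expressing $\next[\mathcal{Y}_u]$ as a convex combination along the segment $\overline{op}$ and verifying the threshold $>3/5$ cleanly. The cleanest route is to drop the one-level induction for the \emph{fraction} claim and instead read $\next[\mathcal{Y}_u]$ off directly as $\first[\mathcal{Y}_{u'}]$ via Lemma~\ref{lem:Construction-propertiesC} applied to the neighbor $u'$, using only the inductive argument for the \emph{collinearity} claim; the collinearity then propagates trivially because every corner-point of a child lies on a segment of the parent corner. I would write the final version in that hybrid form: induction for collinearity, explicit $c_j$ computation for the $>3/5$ bound.
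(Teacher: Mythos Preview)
Your plan is the same as the paper's: for two consecutive nodes $u,u'$ at level $j$, show that the right side $\overline{op}$ of $\alpha_u$ and the left side $\overline{o'q'}$ of $\alpha_{u'}$ lie on one line, and then locate $\last[\mathcal Y_u]=\Rp(\alpha_u,c_j)$ and $\next[\mathcal Y_u]=\first[\mathcal Y_{u'}]=\Lp(\alpha_{u'},c_j)$ on that line via Lemma~\ref{lem:Construction-propertiesC}. The execution, however, has two real problems.

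First, your induction direction makes the base case circular. You induct from the leaves ($j=k$) toward the root, yet in the non-sibling subcase of the base case ($u=w(r)$, $u'=w'(l)$ with $w\neq w'$) you appeal to ``the inductive hypothesis at depth $k-1$'' for the relative position of the parents. That is a \emph{shallower} level, not yet established in a leaves-to-root induction. The paper's one-line proof implicitly runs the opposite way: one shows, by induction from the root to the leaves, that for any two consecutive nodes at level $j$ the right side of $\alpha_u$ and the left side of $\alpha_{u'}$ are collinear. If $u,u'$ are siblings these two sides are literally the same segment; otherwise the right side of $\alpha_u$ lies on the right side of $\alpha_{\mathrm{parent}(u)}$, the left side of $\alpha_{u'}$ lies on the left side of $\alpha_{\mathrm{parent}(u')}$, and the parents are consecutive at level $j-1$. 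Your inductive step via $u(r)$ is correct on its own, but it cannot get off the ground without this separate root-to-leaf observation.

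Second, the fraction you write down, $1-c_j$, is only correct in the sibling case, where $o'=p$ and $q'=o$ so that $\Lp(\alpha_{u'},c_j)=\Rp(\alpha_u,1-c_j)$. When $u$ and $u'$ have different parents, $\alpha_{u'}$ sits far from $\alpha_u$ on the common line: if $m\ge 1$ is the distance from $u$ down to the left child of their lowest common ancestor, a direct computation gives $\next[\mathcal Y_u]=\Rp(\alpha_u,c)$ with $c=\tfrac12\cdot 5^{m}+\tfrac12-c_j\ge \tfrac{11}{4}$, so in particular $\next[\mathcal Y_u]$ is not even on the segment $\overline{op}$. Thus ``unwinding the corner recursion relating $\alpha_{u'}$ to $\alpha_u$'' is a genuine two-case computation, not the single formula $1-c_j$. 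The good news is that in both cases $c\ge 1-c_j>3/4>3/5$, so once the common-line fact is properly established the bound follows; this is what the paper means by ``then apply Lemma~\ref{lem:Construction-propertiesC}''.
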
  
\begin{proof}
 To prove \ref{lem:Construction-properties2:collinear1} and \ref{lem:Construction-properties2:collinear2},
 note that, for any two consecutive nodes in the $j$-level of $T_1(k)$, 
 there is a segment that contains one side of each the corners corresponding to these nodes;
 then apply Lemma \ref{lem:Construction-propertiesC}.
\end{proof}

 \begin{lem}
   $\psi$ is an external separation.
  \end{lem}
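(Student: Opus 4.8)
The goal is to show that the type 1 labeling $\psi:\mathcal{X}\to T_1(k)$ is an external separation; that is, for every node $u$ of $T_1(k)$, every point of $\mathcal{X}/\overline{\mathcal{X}_u}$ lies to the left of every $(\overline{\mathcal{X}_{u(l)}},\{x_u\})$-line and to the left of every $(\{x_u\},\overline{\mathcal{X}_{u(r)}})$-line. By symmetry it suffices to treat the first of these two families, mirroring the structure of Lemma~\ref{lem:external-separation}. The plan is to exploit the explicit coordinates: $\mathcal{Y}$ is in convex position, each $x_w$ is the midpoint of $\first[\mathcal{Y}_w]$ and $\last[\mathcal{Y}_w]$, and by Lemma~\ref{lem:Construction-propertiesC} these first/last points are $\Lp(\alpha_w,c_j)$ and $\Rp(\alpha_w,c_j)$ with $c_j=\tfrac14(1-5^{j-k-1})$.

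First I would set up the analogue of the convex-hull containment used in the abstract proof: let $P$ be the polygonal chain through $\mathcal{Y}\setminus\mathcal{Y}_u$ (equivalently the points outside the triangle $\triangle(u)$ of Lemma~\ref{lem:Construction-properties1}) from $\next[\mathcal{Y}_u]$ to $\previous[\mathcal{Y}_u]$ in counterclockwise order. Since $\mathcal{Y}$ is in convex position and each point of $\mathcal{X}/\overline{\mathcal{X}_u}$ is a midpoint of two points of $\mathcal{Y}\setminus\mathcal{Y}_u$, every point of $\mathcal{X}/\overline{\mathcal{X}_u}$ lies in the convex hull of $P$. Hence it suffices to show that $x_u$ lies to the right of the directed line from $\last[\mathcal{Y}_{u(l)}]$ to $\next[\mathcal{Y}_u]$ — and, because the convex hull of $P$ sits entirely on one side of this directed line once we know $x_u$ does, all points of $\mathcal{X}/\overline{\mathcal{X}_u}$ are to the left of every $(\overline{\mathcal{X}_{u(l)}},\{x_u\})$-line, using that $\overline{\mathcal{X}_{u(l)}}$ lies in the convex hull of $\mathcal{Y}_{u(l)}$ (contained in $\triangle(u(l))$) and that $\next[\mathcal{Y}_u]$ is an extreme point of $P$.

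The remaining claim — $x_u$ is to the right of the directed line from $\last[\mathcal{Y}_{u(l)}]$ to $\next[\mathcal{Y}_u]$ — I would verify by a direct coordinate computation inside the corner $\alpha_u=(q,o,p)$, using Lemmas~\ref{lem:Construction-propertiesC} and~\ref{lem:Construction-properties2}. Write everything in the affine frame of $\alpha_u$: $o$ at the origin, $q$ and $p$ as unit vectors along the two sides. Then $\last[\mathcal{Y}_{u(l)}]=\Rp(\alpha_{u(l)},c_{j+1})$, which by the definition of $\alpha_{u(l)}$ expands to a known combination of $\Lp(\alpha_u,1/5)$ and $\Rp(\alpha_u,1/5)$; the point $\next[\mathcal{Y}_u]$ lies on the extension of the side $\overline{op}$ beyond $\Rp(\alpha_u,2/5)$ by Lemma~\ref{lem:Construction-properties2}(\ref{lem:Construction-properties2:collinear2}), at parameter $c>3/5$ along $\overline{op}$; and $x_u=\tfrac12(\Lp(\alpha_u,c_j)+\Rp(\alpha_u,c_j))$. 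Substituting and computing the sign of the relevant $2\times 2$ determinant shows the orientation is the desired one; since $c_j<\tfrac14<\tfrac15\cdot 2$ the point $x_u$ sits well inside $\triangle(u)$ while $\last[\mathcal{Y}_{u(l)}]$ and $\next[\mathcal{Y}_u]$ sit near the $\overline{op}$ side, making the sign robust.

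The main obstacle I anticipate is bookkeeping rather than conceptual: correctly identifying $\last[\mathcal{Y}_{u(l)}]$ and $\next[\mathcal{Y}_u]$ in the coordinates of $\alpha_u$ (not $\alpha_{u(l)}$), since $\next[\mathcal{Y}_u]$ belongs to the corner of the node to the right of $u$ at the same level, and then showing the orientation sign does not degenerate — this is exactly where Lemma~\ref{lem:Construction-properties2}'s collinearity and the bound $c>3/5$ are essential, because they pin down $\next[\mathcal{Y}_u]$ as lying on the line $\overline{op}$ (so the determinant reduces to comparing two points against a single line through $o$ and $p$, which is a one-parameter check). Once that reduction is in place, the inductive flavour of Lemma~\ref{lem:external-separation} is not even needed: the convexity of $\mathcal{Y}$ plus one orientation inequality per node closes the argument.
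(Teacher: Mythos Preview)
Your proposal follows essentially the same route as the paper's proof. Both reduce by symmetry to the $(\overline{\mathcal{X}_{u(l)}},\{x_u\})$-lines, enclose $\mathcal{X}/\overline{\mathcal{X}_u}$ in the convex hull of the chain $P$ through $\mathcal{Y}\setminus\mathcal{Y}_u$, reduce to the single orientation relation among $\last[\mathcal{Y}_{u(l)}]$, $x_u$, and $\next[\mathcal{Y}_u]$, and verify it in the affine frame of $\alpha_u$ via Lemmas~\ref{lem:Construction-propertiesC} and~\ref{lem:Construction-properties2}; the paper carries out this last step explicitly by intersecting the line through $\last[\mathcal{Y}_{u(l)}]$ and $x_u$ with $\overline{op}$ and showing the intersection parameter $t$ satisfies $t<17/32<3/5$ while $\next[\mathcal{Y}_u]=\Rp(\alpha_u,c)$ with $c>3/5$, which is equivalent to your determinant-sign check.
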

  \begin{proof} 
  
  Let $u$ be a node of $T_1(k)$ and $u(l)$, $u(r)$ be the left and right children of $u$, respectively.
  We need to prove that every point in $\mathcal{X}/\overline{\mathcal{X}_u}$,
  is to the left of every 
  $(\overline{\mathcal{X}_{u(l)}},\lbrace x_u\rbrace)$-line and to the left of every
  $(\lbrace x_u\rbrace,\overline{\mathcal{X}_{u(r)}})$-line. 
  We prove that every point in $\mathcal{X}/\overline{\mathcal{X}_u}$  is to the left of every 
  $(\overline{\mathcal{X}_{u(l)}},\lbrace x_u\rbrace)$-line.
  That every point in $\mathcal{X}/\overline{\mathcal{X}_u}$  is to the left of every 
  $(\lbrace x_u\rbrace,\overline{\mathcal{X}_{u(r)}})$-line can be proven in a similar way.
  
  Let $P$ be the polygonal chain that starts at $\next[\mathcal{Y}_u]$, 
  follows the points of $\mathcal{Y}$ in counterclockwise order,
  and ends at $\previous[\mathcal{Y}_u]$.
  Note that $\mathcal{X}/\overline{\mathcal{X}_u}$ is contained in the convex hull of $P$.
  Thus, to prove that every point in $\mathcal{X}/\overline{\mathcal{X}_u}$  is to the left of every 
  $(\overline{\mathcal{X}_{u(l)}},\lbrace x_u\rbrace)$-line, 
  it is enough to show that $\next[\mathcal{Y}_u]$ is to the left of the directed line 
  from $\last[\mathcal{Y}_{u(l)}]$ to $x_u$.
  
  Let $\ell$ be the directed line from $\last[\mathcal{Y}_{u(l)}]$ to $x_u$ and let $(q,o,p):=\alpha_u$.
  Note that $x_u$ and $\last[\mathcal{Y}_{u(l)}]$ are in the interior of the wedge determined by $\alpha_u$,
  from $\overline{op}$ to $\overline{oq}$ in counterclockwise order.
  By Lemma \ref{lem:Construction-properties2}-\ref{lem:Construction-properties2:collinear2},
  $\next[\mathcal{Y}_u]$ is on $\overline{op}$ and $\next[\mathcal{Y}_{u}]=\Rp(u,c)$ for some $c>3/5$.
  To finish this proof we show that $\ell$ intersects $\overline{op}$
  at a point $\Rp(u,c')$ for some $c'<3/5$.
 
 Consider the following coordinate system, 
 $o$ is the origin,
 $p$ has coordinates $(1,0)$
 and $q$ has coordinates $(0,1)$.
 Assume that this is the new coordinate system.
 Let $t$ be such that the intersection point 
 between $\ell$ and the abscissa is the point $(t,0)$; 
 thereby, we need to prove that $t<3/5$.
 
 By Lemma \ref{lem:Construction-propertiesC},
 $\first[\mathcal{Y}_u]$ and $\last[\mathcal{Y}_u]$ have coordinates
 $(0,c_j)$ and $(c_j,0)$;
 thus, $x_u$ has coordinates $(c_j/2,c_j/2)$.
 By construction of $\alpha_{u(l)}$ 
 and Lemma \ref{lem:Construction-propertiesC},
 $\last[\mathcal{Y}_{u(l)}]$ is in the segment from $(0,1/5)$ to $(1/5,0)$
 in $\Rp(u(l),c_{j+1})$.
 Thus $\last[\mathcal{Y}_{u(l)}]$ has coordinates $(\frac{1}{5}c_{j+1},\frac{1}{5}(1-c_{j+1}))$ 
 and the equation of $\ell$ is 
 $$x=\frac{c_{j+1}/5-c_j/2}{(1-c_{j+1})/5-c_j/2}(y-c_j/2)+c_j/2 $$
 taking $y=0,s=k-j$, and replacing $c_j$ and $c_{j+1}$, we have that 
 $$t= -\frac{1}{40\cdot5^{s}} -\frac{1}{40(1+3/5^s)} - \frac{1}{40(3\cdot5^s + 5^{2s})}
 + \frac{3}{8(3/5^s + 1)} + \frac{1}{8(3+5^s)} + \frac{1}{8}$$
 finally, as $5^s \geq 1$ 
 \[t< \frac{3}{8} + \frac{1}{8(4)} + \frac{1}{8}=\frac{17}{32}<\frac{3}{5}.\]
\end{proof}

\subsection*{Construction of a Drawing of $\mathcal{X}$.}
 In this subsection we find a subset of $\mathcal{X}$ that is a drawing of $\mathcal{X}'$.
 By Theorem~\ref{thm:PossibleAlmosConvex}, there are two cases:
 $\mathcal{X}'$ is of type 1 and has $n=2^{k+1}-2$ points;
 or $\mathcal{X}'$ is of type 2 and has $n=3\cdot2^{k-1}-2$ points.
 By Theorem~\ref{thm:PossibleAlmosConvex},
 if $\mathcal{X}'$ is type 1, $\mathcal{X}'$ and $\mathcal{X}$ have the same order type
 and $\mathcal{X}$ is a drawing of $\mathcal{X}'$.
 Assume that $\mathcal{X}'$ is type~2.

 Let $w$ be the root of $T_1(k)$; $u$ and $u'$ be the children of $w$;
 $u(l)$ and $u(r)$ be the children of $u$; 
 and $u'(l)$ and $u'(r)$ be the children of $u'$.
 We define $T$ as the tree obtained from $T_1(k)$, 
 by making $u'(l)$ the third child of $u$ and 
 removing $w$, $u'$, $u'(r)$ and every descendant of $u'(r)$.
 Recall that $T_2(k)$ is a tree such that, its root has three children,
 and each child is the root of a complete binary tree with $2^{k-1}-1$ points.
 Note that $T$ and $T_2(k)$ are isomorphic.
 
 Let $\mathcal{X}_2$ be the set of points $x_u$ such that $u$ is in $T$. 
 Let $\psi':\mathcal{X}_2\rightarrow T$ be such that $\psi'(x_u)=u$.
 Note that: 
 as $\psi$ is an internal separation, $\psi'$ is an internal separation;
 and 
 as $\psi$ is an external separation, $\psi'$ is external separation.
 Thus by Theorem~\ref{thm:characterization}, $\mathcal{X}_2$ is a nested almost convex set.
 
 By Theorem \ref{thm:PossibleAlmosConvex}, 
 as $\mathcal{X}_2$ has $3\cdot2^{k-1}-2$ points, 
 $\mathcal{X}_2$ and $\mathcal{X}'$ have the same order type
 and $\mathcal{X}_2$ is a drawing of $\mathcal{X}'$.
 
\section{Decision Algorithm for Nested Almost Convexity.} \label{section:Certificate}

Let $\mathcal{X}$ be a set of $n$ points. 
In this section, we present an $O(n\log n)$ time algorithm,
to decide whether $\mathcal{X}$ is a nested almost convex set.
This algorithm is based in Theorem~\ref{thm:characterization}-\ref{thm:characterization-region} 
and consists of four steps. 
At each step, it is verified if $\mathcal{X}$ satisfies a certain property;
$\mathcal{X}$ is a nested almost convex set if and only if
$\mathcal{X}$ satisfies each of these properties.

By Theorem~\ref{thm:PossibleAlmosConvex}, 
if $\mathcal{X}$ is a nested almost convex set, then $n=2^{k-1}-2$ or $n=3\cdot2^{k-1}-2$
for some integer $k$.
The first step is to verify whether $\mathcal{X}$ has one of those cardinalities.
If $n=2^{k-1}-2$ let $T:=T_1(k)$.
If $n=3\cdot2^{k-1}-2$ let $T:=T_2(k)$.
Recall that:
the $j$-level of $T_1(k)$ is defined as the set of the nodes at distance $j$ from the root; and
the $j$-level of $T_2(k)$ is defined as the set of the nodes at distance $j-1$ from the root.

By Lemma~\ref{lem:ConvexLayers}, 
if $\mathcal{X}$ is a nested almost convex set then:
for $1\leq j\leq k$,
the number of nodes in the $j$-level of $T$
is equal to the number of nodes in the $j$-th convex layer of $\mathcal{X}$.
The second step is to verify whether $\mathcal{X}$ satisfies Lemma~\ref{lem:ConvexLayers}.
Chazelle \cite{Chazelle85ConvexLayers} showed that,
the convex layers of a given an $n$-point set can be found in $O(n\log n)$ time;
thus the second step can be done in $O(n\log n)$ time.
We denote by $R_j$ the set of points in the $j$-th convex layer of $\mathcal{X}$.

The third step is to verify whether $\mathcal{X}$ satisfies Lemma~\ref{lem:Adoptables}.
For $1\leq j \leq k-1$, we do the following.
Let $p_0,\dots p_{t}$ be the points in $R_j$ in counterclockwise order. 
We search for two consecutive points in $R_{j+1}$ that are adoptable by $p_0$.
If those points exist, they are the only pair of consecutive points in $R_{j+1}$ that are adoptable by $p_0$. 
Let $q_0,q_1,\dots p_{2t+1}$ be the points in $R_{j+1}$ in counterclockwise order, 
such that $q_0$ and $q_1$ are adoptable by $p_0$. 
Then we verify whether $q_{2i},q_{2i+1}$ are adoptable by $p_i$ for $0\leq i \leq t$.

Let $p$ be in $R_j$, and
let $q_r$, $q_{r+1}$, $q_{r+2}$, $q_{r+3}$ be four consecutive points in $R_{j+1}$.
Note that $q_{r+1}$ and $q_{r+2}$ are adoptable by $p$, if and only if, 
$p$ is in the intersection of the triangle determined by $q_r$, $q_{r+1}$ and $q_{r+2}$,
and the triangle determined by $q_{r+1}$, $q_{r+2}$ and $q_{r+3}$.
Thus, we can verify whether $q_{2i},q_{2i+1}$ are adoptable by $p_i$ in constant time;  
the third step hence requires linear time.

If $\mathcal{X}$ satisfies Lemma~\ref{lem:Adoptables}, we can
define a labeling $\psi:\mathcal{X}\rightarrow T$ like the one 
defined in Section \ref{Section:Characterization}-\ref{SubSection:AlmostConvexHasLabeling}.
The fourth step is to verify if $\psi$ is well laid, this requires linear time.

According to the proof of Theorem~\ref{thm:characterization}, 
$\mathcal{X}$ is a nested almost convex set if and only if 
$\mathcal{X}$ verifies the properties in previous four steps.  
This can be done in $O(n\log n)$ time.

\bibliographystyle{plain}
\bibliography{bib2}

\end{document}